\documentclass[11pt]{article}
\usepackage[margin=1in]{geometry}
\usepackage{times}
\usepackage{amsfonts}
\usepackage{amsmath,amsthm,amssymb}
\usepackage{latexsym}
\usepackage{epic}
\usepackage{epsfig}
\usepackage{hyperref}
\usepackage{verbatim}
\usepackage[justification=centering]{caption}
\usepackage{enumitem}
\usepackage{color}
\allowdisplaybreaks[1]
\usepackage[numbers,sort&compress]{natbib}
\usepackage{algorithm}
\usepackage{algorithmic}
\usepackage{setspace}

\makeatletter
\newcommand*{\rom}[1]{\expandafter\@slowromancap\romannumeral #1@}
\makeatother

\newtheorem{theorem}{Theorem}[section]
\newtheorem{corollary}[theorem]{Corollary}
\newtheorem{lemma}[theorem]{Lemma}
\newtheorem{proposition}[theorem]{Proposition}

\newtheorem{definition}[theorem]{Definition}
\newtheorem{remark}[theorem]{Remark}

\newtheorem{problem}[theorem]{Problem}

\newtheorem{fact}[theorem]{Fact}

\newtheorem*{conjecture*}{Conjecture}
\newtheoremstyle{nonindented}{1ex}{1ex}{}{}{\bfseries}{.}{.5em}{}
\newtheoremstyle{indented}{1ex}{1ex}{\itshape\addtolength{\leftskip}{0.6cm}\addtolength{\rightskip}{0.6cm}}{}{\bfseries}{.}{.5em}{}
\theoremstyle{nonindented}
\theoremstyle{indented}
\theoremstyle{plain}

\renewcommand{\hat}{\widehat}
\renewcommand{\tilde}{\widetilde}
\renewcommand{\bar}{\overline}

\DeclareMathOperator{\poly}{poly}



\def\min{\qopname\relax n{min}}
\def\max{\qopname\relax n{max}}


\newcommand{\RR}{\mathbb{R}}

\def\E{\mathcal{E}}

\def\G{\mathcal{G}}

\def\P{\mathcal{P}}
\def\Q{\mathcal{Q}}

\def\eps{\epsilon}






\newcommand{\maxi}[1]{\mbox{maximize} & {#1 } & \\}
\newcommand{\maxis}[1]{\mbox{max} & {#1 } & \\}
\newcommand{\minis}[1]{\mbox{min} & {#1 } & \\}

\newcommand{\st}{\mbox{subject to} }
\newcommand{\sts}{\mbox{s.t.} }
\newcommand{\con}[1]{&#1 & \\}
\newcommand{\qcon}[2]{&#1, & \mbox{for } #2.  \\}

\newenvironment{lp*}{\begin{equation*}  \begin{array}{lll}}{\end{array}\end{equation*}}


	\title{The Mysteries of Security Games: Equilibrium Computation Becomes Combinatorial Algorithm Design}
	
\author{
	Haifeng Xu\thanks{Supported by NSF grant CCF- 1350900 and MURI grant W911NF-11-1-0332.} \\
	Department of Computer Science\\
	University of Southern California\\
	{\tt haifengx@usc.edu}
}

\begin{document}
	\maketitle
		
	\begin{abstract}
The \emph{security game} is a basic model for resource allocation in adversarial environments. Here there are two players, a \emph{defender} and an \emph{attacker}. The defender wants to allocate her limited resources to defend critical targets and the attacker seeks his most favorable target to attack. In the past decade,  there has been a surge of research interest in analyzing and solving security games that are motivated by applications from various domains.  Remarkably, these models and their game-theoretic solutions have led to real-world deployments in use by major security agencies like the LAX airport, the US Coast Guard and  Federal Air Marshal Service, as well as non-governmental organizations. Among all these research and applications, equilibrium computation serves as a foundation.  

This paper examines security games from a theoretical perspective and provides a unified view of various security game models. In particular, each security game can be characterized by a set system $\E$ which consists of the defender's pure strategies;  The defender's best response problem can be viewed as a combinatorial optimization problem over $\E$. Our framework captures most of the basic security game models in the literature, including all the deployed systems; The set system $\E$ arising from various domains encodes standard combinatorial problems like bipartite matching, maximum coverage, min-cost flow, packing problems, etc.  Our main result shows that equilibrium computation in security games is essentially a combinatorial problem. In particular, we prove that, for any set system $\E$, the following problems can be reduced to \emph{each other} in polynomial time: (0) combinatorial optimization over $\E$; (1) computing the minimax equilibrium for zero-sum security games over $\E$; (2)  computing the strong Stackelberg equilibrium for security games over $\E$; (3) computing the best or worst (for the defender) Nash equilibrium for security games over $\E$. Therefore, the hardness [polynomial solvability] of any of these problems implies the hardness [polynomial solvability] of all the others. Here, by ``games over $\E$'' we mean the class of security games with arbitrary payoff structures, but a fixed set $\E$ of defender pure strategies. This shows that the complexity of a security game is essentially determined by the set system $\E$.
We view drawing these connections  as an important conceptual contribution of this paper.
	\end{abstract}

\section{Introduction}
The security of critical infrastructures and areas is an important concern around the world, especially given the increasing threats of terrorism. Limited security resources cannot provide full security coverage at all places all the time, leaving potential attackers the chance to explore patrolling patterns and attack the weakness. How can we make use of the limited resources to build the most effective defense against strategic attackers? The past decade has seen an explosion of research in attempt to address this fundamental question, which has led to the development of the well-known model of security games.  A \emph{security game} is a two-player game played between a \emph{defender} and an \emph{attacker}. The defender allocates (possibly randomly)  limited security resources, subject to various domain constraints, to protect a set of {\it targets}; The attacker chooses one target to attack.  This is a basic model for resource allocation in adversarial environments, and naturally captures the strategic interaction between security agencies and potential adversaries.  Indeed, these models and their game-theoretic solutions have led to  real-world deployments in use today by major security agencies.  For example, they are used by LAX airport for checkpoint placement, the US Coast Guard for port patrolling and the Federal Air Marshal Service for scheduling air marshals \cite{tambe2011};  Recently, new models and algorithms have been tested by the Transportation Security Administration  for airport passenger screening \cite{Brown16a} and by non-governmental
organizations in Malaysia for wildlife protection \cite{Fang16a}.  

Equilibrium computation is perhaps the most basic problem in security games. Indeed, there have been numerous algorithms developed for solving various security games motivated by different real-world applications (we refer the reader to \cite{tambe2011} for a review).  However, many of these algorithms are based on integer linear  programs and heuristics, which may run in exponential time or output non-optimal solutions. The computational complexity of solving these games is not well-understood. Moreover, most of the literature has focused on the computation of the strong Stackelberg equilibrium (minimax equilibrium when the game is zero-sum), which may be inappropriate when the players move simultaneously (see Section \ref{sec:SG:equ} for a more detailed discussion). In this paper, we systematically study the computational complexity of the three main equilibrium concepts adopted in security games, namely, the minimax equilibrium, strong Stackelberg equilibrium, and Nash equilibrium. However, instead of examining all the models one by one, we provide a unified view of security games that captures most of the basic models in the literature, and prove our results in this general framework. Interestingly, it turns out that none of these equilibrium concepts is computationally harder than the others in any security game captured by our  framework.

\subsection{Our Results}
We start with a unified formulation of security games. In particular, we show that security games are essentially \emph{bilinear} games, in which each player's payoff has the form $\mathbf{x}^T A \mathbf{y} + \alpha \cdot \mathbf{y}$; the defender's mixed strategy $\mathbf{x}$ lies in a polytope $\P \subseteq \RR^n$ and the attacker's mixed strategy $\mathbf{y}$ is in the $n$-dimensional simplex $\Delta_n$. Interestingly, the vertices of $\P$, i.e., all the defender pure strategies, form a set system $\E$, and the defender's best response problem can be viewed as a combinatorial optimization problem over $\E$. This general framework captures most of the basic security game models in the literature, including all the deployed security systems. We show that, the set system $\E$ arising from various security domains encodes many standard combinatorial problems like bipartite matching, maximum coverage, min-cost flow, packing problems, etc. 

We are interested in solving the class of security games \emph{over} $\E$, by which we mean all security games with arbitrary payoff structures, but a fixed set $\E$ of defender pure strategies. Our main theoretical results build connections between combinatorial optimization over $\E$ and equilibrium computation for security games over $\E$. In particular, we prove that, for any set system $\E$, the following problems can be reduced to \emph{each other} in polynomial time: (0) combinatorial optimization over $\E$; (1) computing the minimax equilibrium for zero-sum security games over $\E$; (2)  computing the strong Stackelberg equilibrium for security games over $\E$; (3) computing the best or worst (for the defender) Nash equilibrium for security games over $\E$. Therefore, the hardness [polynomial solvability] of any of these problems implies the hardness [polynomial solvability] of all the others. This shows that the complexity of a security game is essentially determined by the set system $\E$. As applications of these results, we also show how to use them to easily recover and strengthen some known complexity results in the literature, as well as to resolve some open problems from previous work.  We remark that though these results are primarily theoretical, our reductions from equilibrium computation to combinatorial optimization can be practically implemented via well-known engineering approaches, e.g., the column generation technique.

\subsection{Related Work}
Several papers have examined the computational complexity of security games in particular settings. The most relevant are the following two papers:  Korzhyk et al. \cite{korzhyk2010complexity} consider the security settings where each security resource can be allocated to protect a subset of targets;  Letchford and Conitzer \cite{letchford2013}  consider security games on graphs where targets are nodes and security resources patrol along paths. They show polynomial solvability or NP-hardness under different conditions. To the best of our knowledge, there is no other work which specifically focuses on a complexity study of security games. Nevertheless, some hardness results are provided separately in different work for different models, e.g., \cite{gan2015security,Brown16a}. We note that our framework only concerns the basic security game models. There are various refinements of the basic models, e.g., the Bayesian setting, repeated setting, stochastic setting, etc. Examining the complexity of these settings is an interesting avenue for future work, but  is not the focus of this paper. 

Also related to our work is the rich literature on equilibrium computation for succinctly represented games. The most fundamental problem along this line is to compute one Nash equilibrium for a two-player normal-form game. This is proven to be PPAD-hard \cite{daskalakis2009,chen2009}. In the same setting, computing the  Nash equilibrium that maximizes one player's utility is NP-hard \cite{gilboa1989nash,conitzer2008}, but the strong Stackelberg equilibrium can be computed in polynomial time by solving linear programs \cite{conitzer2006}. Immorlica et al. \cite{Immorlica2011} consider the computation of bilinear zero-sum games, and show how to compute the minimax equilibrium when both players' action polytopes have explicit polynomial-size representations. They also reduce computing an $\epsilon$-minimax equilibrium to an additive FPTAS for the player's best response problem, using the no regret learning framework. However, they do not consider the reverse direction, namely, the reduction from best response to equilibrium computation. Garg et al. \cite{garg2011bilinear} consider bilinear general-sum games, and show that such games encode many interesting classes of games, including two-player normal-form games, two-player Bayesian games, polymatrix games, etc., hence are hard to solve in general. They propose polynomial time algorithms for computing approximate equilibria when the payoff matrices have low rank.

\section{Preliminaries}\label{sec:prem}
\subsection{Game Theory Basics}\label{sec:prem:game}
A \emph{bilinear game} is given by a pair of matrices $(A,B)$ and polytopes $(\P, \Q )$.  Given that player 1 plays $\mathbf{x} \in \P$ and player 2 plays $\mathbf{y} \in \Q$, the utilities for player 1 and 2 are $\mathbf{x}^T A \mathbf{y}$ and $\mathbf{x}^T B \mathbf{y}$ respectively. As we will show later, security games are essentially bilinear games with slightly richer structure of the form $\mathbf{x}^T A \mathbf{y} + \alpha \cdot \mathbf{y}$ and $\mathbf{x}^T B \mathbf{y} +\beta \cdot \mathbf{y}$. Note that each vertex of the polytope is a pure strategy, and a player may have exponentially many pure strategies even though her action polytope (e.g., a hypercube) can be compactly represented. Throughout this paper, we assume all the action polytopes are compact.

{\bf Nash Equilibrium (NE).} A strategy profile $(\mathbf{x} ,\mathbf{y})$ is called a Nash Equilibrium (NE), if
\begin{equation*}
\mathbf{x}^T A \mathbf{y} \geq \mathbf{x}'^T A \mathbf{y}, \, \forall \mathbf{x}' \in \P \qquad \text{and} \qquad \mathbf{x}^T B \mathbf{y} \geq \mathbf{x}^TB\mathbf{y}' , \, \forall \mathbf{y}'  \in \Q.
\end{equation*}
According to Nash's theorem, there exists at least one NE, possibly multiple NEs, in a bilinear game. As observed in \cite{garg2011bilinear}, bilinear games are general enough to capture many interesting classes of games, therefore computing an NE is hard in general bilinear games. 

{\bf Strong Stackelberg Equilibrium (SSE).} NE captures the equilibrium outcome of a simultaneous-move game. However, when one player moves before another player, the Strong Stackelberg Equilibrium \cite{Stackelberg,Stengel} serves as a more appropriate solution concept. A two-player Stackelberg game is played between a \emph{leader} and a \emph{follower}. The leader moves first, or equivalently, commits to a mixed strategy; The follower observes the leader's strategy and best responds. The leader's optimal strategy, together with the follower's best response, forms an SSE. Formally, let $\mathbf{y}_{\mathbf{x}} = \arg \max_{\mathbf{y}' \in \Q}  \mathbf{x}^T B   \mathbf{y}'$ denote the follower's best response to a leader strategy $\mathbf{x} \in \P$. A strategy profile $(\mathbf{x} ,\mathbf{y})$ is called a Strong Stackelberg Equilibrium (SSE), if
\begin{equation*}
\mathbf{x} = \arg \max_{\mathbf{x}' \in \P } \mathbf{x}'^T A \mathbf{y}_{\mathbf{x}'}, \, \qquad \text{ and } \qquad \mathbf{y} = \mathbf{y}_{\mathbf{x}}. 
\end{equation*}
Without loss of generality, we can assume $\mathbf{y}$ is a pure strategy since it is a best response to $\mathbf{x}$. 

{\bf Zero-Sum Games and the Minimax Equilibrium.}  When $A =  - B$, the bilinear game is \emph{zero-sum}. As widely known,  all standard equilibrium concepts, including the NE and SSE, are payoff-equivalent to the minimax equilibrium in zero-sum games. A strategy profile $(\mathbf{x} ,\mathbf{y})$, where $\mathbf{x} \in \P$ and $\mathbf{y} \in \Q$, is called a minimax equilibrium if
\begin{equation*}
\mathbf{x}^T A \mathbf{y} \geq \mathbf{x}'^T A \mathbf{y}, \, \forall \mathbf{x}' \in \P \qquad \text{and} \qquad \mathbf{x}^T A \mathbf{y} \leq \mathbf{x}^TA \mathbf{y}' , \, \forall \mathbf{y}'  \in \Q.
\end{equation*}If $(\mathbf{x} ,\mathbf{y})$ is a minimax equilibrium, $\mathbf{x}$ is called the player 1's \emph{maximin} strategy, $\mathbf{y}$ is called the player 2's \emph{minimax} strategy and $V = \mathbf{x}^T A \mathbf{y} = \max_{\mathbf{x}' \in \P} \min_{\mathbf{y}' \in \Q} \mathbf{x}'^TA\mathbf{y}'$ is called the \emph{value} of the game. Each zero-sum game  has a unique game value.

\subsection{Linear Optimization and Convex Decomposition via Oracles}\label{sec:prem:opt}
Any linear optimization problem can be implicitly described as $\max_{x \in \P} c^Tx$ where $\P \subseteq \RR^N$ is a polytope. 
By ``(linear) optimization over $\P$" we mean solving the problem $\max_{x \in \P} c^Tx$ for any $c \in \RR^N$. 
A \emph{membership oracle} for $\P$ is an algorithm that, on input $x_0 \in \RR^N$, correctly asserts whether $x_0$ is in $\P$ or not. 
A \emph{separation oracle} for $\P$ is an algorithm that, on input $x_0 \in \RR^N$, either asserts $x_0 \in \P$  or finds a hyperplane $a^Tx = b$ that separates $x_0$ from $\P$ in the following sense: $a^T x_0 >b$ and $a^T x \leq b$ for any $x \in \P$. The membership [separation] problem for polytope $\P$ is to compute a membership [separation] oracle for $\P$. In the following theorems, by ``polynomial running time" we mean polynomial in the dimension $N$ and the description length of the instance; Solving an LP means returning the optimal objective value as well as a vertex optimal solution. 
The following celebrated results are due to  Gr{\"o}tschel, Lov{\'a}sz and Schrijver \cite{grotschel2012}.

\begin{theorem}\label{thm:sep}
	Let $\P \subseteq \RR^N$ be a polytope. If there is a polynomial time algorithm to solve the separation problem for  $\P$, then there is a polynomial time algorithm to solve any linear program over $\P$ as well as its dual program. 
\end{theorem}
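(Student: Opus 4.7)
The plan is to prove this via the ellipsoid method of Khachiyan, which is the classical route taken by Gr\"otschel, Lov\'asz and Schrijver. The overarching idea is that a separation oracle is exactly the interface the ellipsoid method needs to drive itself: at each iteration we have a current ellipsoid $E_k \supseteq \P^*$ (where $\P^*$ is the set of still-candidate points), we query its center, and from the oracle we obtain a halfspace that cuts $E_k$ in half. Shrinking by a constant-factor volume ratio at every step, the method terminates in time polynomial in $N$ and the bit-length of the input after which either $\P$ is certified empty or a near-optimal point is found.

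First I would reduce linear optimization $\max_{x \in \P} c^T x$ to a sequence of feasibility problems by sliding the objective: for a guess $\gamma$, test whether $\P_\gamma := \P \cap \{x : c^T x \geq \gamma\}$ is empty. A separation oracle for $\P_\gamma$ is trivial given one for $\P$ (check $c^T x \geq \gamma$; otherwise defer to the $\P$-oracle). Running the ellipsoid method on $\P_\gamma$ starting from a ball of radius $R = 2^{\poly(\langle \P\rangle)}$ (such an $R$ exists because the vertices of $\P$ have bounded bit-complexity, which can be deduced from the oracle's input/output bit size) gives, after $\poly(N, \langle c \rangle, \langle \P \rangle)$ iterations, either a point of $\P_\gamma$ or a certificate that $\vol(\P_\gamma)$ is too small to contain a full-dimensional region. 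Combined with binary search over $\gamma$, this yields the optimal value $\gamma^*$, and at that point the ellipsoid method returns a point $\tilde x$ that is near-optimal; a standard simultaneous Diophantine rounding step (again from GLS) converts $\tilde x$ into an exact vertex optimal solution $x^*$, since the vertices of $\P$ form a discrete well-separated set given the bit bound.

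To recover a dual optimal solution, the key observation is that the ellipsoid method only ever invokes the separation oracle polynomially many times, producing a polynomial list of inequalities $a_i^T x \leq b_i$ used as cuts. Letting $\P' = \{x : a_i^T x \leq b_i \text{ for all } i\}$, we have $\P \subseteq \P'$ yet the algorithm found the same optimum $\gamma^*$ over $\P'$. Thus the explicit LP $\max \{c^T x : a_i^T x \leq b_i\}$ has the same optimal value as $\max_{x \in \P} c^T x$, and standard simplex/ellipsoid solution of this \emph{polynomial-size} LP returns explicit dual multipliers, which by complementary slackness yield a dual optimum for the original program (whose dual constraints are given implicitly by the extreme directions separated along the way).

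The main obstacle, as usual with the ellipsoid method, is bit-complexity bookkeeping: showing that $R$, the perturbation $\epsilon$ used to make $\P_\gamma$ full-dimensional whenever it is nonempty, and the precision of the ellipsoid updates can all be chosen as $\poly(\langle \P \rangle)$ while still (i) keeping every intermediate ellipsoid inside a polynomially-bounded-bit region, and (ii) ensuring the final rounding step recovers an exact vertex. These are the technical heart of GLS and I would quote them rather than redo them; the conceptual content, on which the use in the rest of the paper depends, is the black-box reduction \emph{separation} $\Rightarrow$ \emph{optimization}, which the three-step outline above establishes.
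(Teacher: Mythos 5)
The paper does not prove this theorem; it simply cites it as a celebrated result of Gr\"otschel, Lov\'asz and Schrijver, and your sketch is a correct outline of exactly that standard ellipsoid-method argument (binary search on the objective, separation-driven volume reduction, Diophantine rounding to an exact vertex, and recovering a polynomially-supported dual optimum from the cutting planes actually generated). No gaps worth flagging at this level of detail; the technical bit-complexity points you defer to GLS are indeed where all the real work lies.
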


\begin{theorem}\label{thm:mem}
	(Also see \cite[P.189]{schrijver1998}) 	Let $\P \subseteq \RR^N$ be a polytope, and assume $x_0 \in \P$ be a point known to any algorithm. Then there is a polynomial time algorithm for the optimization problem over $\P$ if and only if there is a polynomial time algorithm for the membership problem for  $\P$. 
\end{theorem}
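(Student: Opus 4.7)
The plan is to prove each direction of the equivalence separately; the reverse direction (membership $\Rightarrow$ optimization) is the substantive one, while the forward direction follows almost immediately from Theorem~\ref{thm:sep}. For the forward direction, which does not require the known point $x_0$, assume a polynomial time algorithm for $\max_{x \in \P} c^T x$. By the Gr{\"o}tschel--Lov{\'a}sz--Schrijver equivalence of separation and optimization (of which Theorem~\ref{thm:sep} is one half), a polynomial time optimization oracle over $\P$ also yields a polynomial time separation oracle for $\P$: one may, for instance, separate a point $y$ from $\P$ by maximizing a suitable linear functional determined by $y$ over the polar body of $\P$. Since a separation oracle in particular reports whether its input lies in $\P$, a membership oracle comes for free.

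For the reverse direction, the plan is to construct a separation oracle for $\P$ from the membership oracle together with the known interior point $x_0 \in \P$, after which Theorem~\ref{thm:sep} does the rest. Given a query point $y \in \RR^N$, first invoke the membership oracle at $y$ and return immediately if $y \in \P$. Otherwise, perform binary search along the segment $[x_0, y]$, calling membership at midpoints, to locate a boundary point $z \in \partial \P$ on this segment to within exponentially small accuracy. The central difficulty is that $z$ itself does not expose a supporting hyperplane of $\P$: the membership oracle is too coarse to produce one directly. The resolution is to query membership at a carefully designed small collection of points in a neighborhood of $z$; this samples the local geometry of $\P$ finely enough to implement a \emph{shallow separation oracle} in the sense of Gr{\"o}tschel, Lov{\'a}sz and Schrijver. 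The shallow-cut ellipsoid method then converts shallow separation into full-fledged linear optimization over $\P$.

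The main obstacle in making this effective is quantitative geometric control on $\P$: the binary search must converge in polynomially many iterations, and the shallow-cut ellipsoid method must terminate in polynomial time. Both requirements are met once one assumes $\P$ is sandwiched between an inner ball of radius $2^{-\poly(N)}$ around $x_0$ and an outer ball of radius $2^{\poly(N)}$. These bounds are standard consequences of $\P$ being a rational polytope whose defining inequalities have description length polynomial in $N$, and they allow both the binary-search step and the ellipsoid step to be bounded polynomially. Assembling the shallow separation oracle this way and invoking Theorem~\ref{thm:sep} yields the desired polynomial time optimization algorithm and closes the equivalence.
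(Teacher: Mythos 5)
First, a point of reference: the paper offers no proof of this statement at all --- it is quoted as a known result of Gr\"otschel, Lov\'asz and Schrijver, with a pointer to \cite[P.189]{schrijver1998} --- so your attempt can only be measured against the standard literature proof. Your forward direction (optimization $\Rightarrow$ separation via polarity $\Rightarrow$ membership) is fine as an appeal to the other half of the GLS equivalence.

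The reverse direction, however, has a genuine gap, and it sits exactly where the theorem's difficulty lives. You assume $\P$ contains a ball $B(x_0,2^{-\poly(N)})$ and call this ``a standard consequence of $\P$ being a rational polytope.'' It is not: the hypothesis gives only \emph{some} point $x_0\in\P$, which may be a vertex (so no ball around it lies in $\P$ even when $\P$ is full-dimensional), and $\P$ may fail to be full-dimensional altogether, in which case no such ball exists around any point. Without a certified inner ball, your binary search has no guaranteed well-separated boundary point to converge to, and the shallow-cut ellipsoid method has no volume lower bound against which to terminate. Second, the step ``query membership at a carefully designed small collection of points in a neighborhood of $z$'' to obtain a shallow separation oracle is asserted rather than proved, and it is essentially the entire content of the Yudin--Nemirovskii theorem: finitely many membership answers near a boundary point do not by themselves pin down a hyperplane valid for all of $\P$ to the required accuracy, especially when $z$ lies near a lower-dimensional face. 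The standard argument does not extract a supporting hyperplane by local sampling; it passes to the polar body (membership for $\P$ yields validity for $\P^{\circ}$, which is then boosted to violation and hence separation by an ellipsoid argument), and the extension from centered full-dimensional bodies to an arbitrary well-described polytope with only a known, possibly boundary, point requires the further machinery of computing the affine hull and rounding via simultaneous Diophantine approximation. As written, your sketch would need all of that supplied before the equivalence closes.
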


To implement a mixed strategy $x \in \P$, we need to decompose $x$ into a linear combination of pure strategies, i.e., vertices of $\P$. This can also be done efficiently given access to an efficient separation oracle. 
\begin{theorem}\label{thm:decompose}
	Let $\P \subseteq \RR^N$ be a polytope. If the separation problem for $\P$ can be solved in polynomial time, then there is a polynomial time algorithm that, on any input  $x \in \P$, computes $N+1$ vertices $x_1,...,x_{N+1}\in \P$ and convex coefficients $\lambda_1,...,\lambda_{N+1}$ such that $x = \sum_{i=1}^{N+1} \lambda_i x_i$.
\end{theorem}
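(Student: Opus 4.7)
The plan is to give an iterative ``peeling'' algorithm that produces one vertex of $\P$ per iteration while strictly reducing the dimension of the face on which the remaining point lives. Starting from dimension at most $N$, this terminates in at most $N$ iterations and yields at most $N+1$ vertices (padding with a repeated vertex if necessary to reach exactly $N+1$). The engine is Theorem~\ref{thm:sep}: given a separation oracle for any face $F$ of $\P$, we can solve LPs over $F$ in polynomial time and obtain a \emph{vertex} optimal solution, which is automatically a vertex of $\P$.

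Concretely, set $x_0 = x$ and $\P_0 = \P$. In iteration $k$, given a point $x_k$ in a face $\P_k$ of $\P$ together with a polynomial-time separation oracle for $\P_k$, I would solve an LP over $\P_k$ with a generic objective to obtain a vertex $v_k$. If $x_k = v_k$, terminate. Otherwise, let $d = x_k - v_k \neq 0$ and compute
\[
\mu_k \;=\; \max\{\, s \geq 0 \;:\; x_k + s\, d \in \P_k \,\},
\]
which is a one-dimensional LP solvable in polynomial time: a separation oracle for the scalar feasible set $\{s \geq 0 : x_k + s\, d \in \P_k\}$ is obtained from that of $\P_k$ by translating every returned separating hyperplane $a^T z \leq b$ into the scalar inequality $a^T x_k + s\,(a^T d) \leq b$. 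The boundary point $y_k := x_k + \mu_k\, d$ lies on the relative boundary of $\P_k$, and a direct check shows
\[
x_k \;=\; \frac{\mu_k}{1+\mu_k}\, v_k \;+\; \frac{1}{1+\mu_k}\, y_k.
\]
Setting $x_{k+1} = y_k$ and taking $\P_{k+1}$ to be the intersection of $\P_k$ with a supporting hyperplane of $\P_k$ through $y_k$, we recurse on a face whose dimension is strictly smaller. Unfolding the recursion yields $x = \sum_i \lambda_i v_i$ with at most $N+1$ vertices.

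The main obstacle is producing, in polynomial time, a supporting hyperplane $H_k$ of $\P_k$ through $y_k$ together with a separation oracle for the next face $\P_{k+1} = \P_k \cap H_k$; without an explicit facet description, the tight inequality at $y_k$ must be extracted from the oracle itself. The idea is to query the separation oracle of $\P_k$ at points $x_k + s\, d$ for $s$ just above $\mu_k$; any such query returns a hyperplane $a^T z \leq b$ that separates this infeasible point from $\P_k$, and by the standard rationality and polynomial bit-complexity of vertices of $\P$ one can round the returned hyperplane to one that is exactly supporting at $y_k$. The separation oracle for $\P_{k+1}$ is then the composite oracle: first check the equalities $a_k^T z = b_k$ together with those inherited from previous iterations, and, if they hold, defer to the separation oracle of $\P$. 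Iterating at most $N$ times drives the face dimension to zero, at which point $x_k$ is itself a vertex of $\P$, and the bookkeeping of the convex coefficients across the peels delivers the claimed decomposition.
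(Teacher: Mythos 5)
The paper itself does not prove this statement: Theorem~\ref{thm:decompose} is quoted in the preliminaries as a known consequence of the Gr\"otschel--Lov\'asz--Schrijver machinery (it is essentially the algorithmic Carath\'eodory theorem, Theorem~6.5.11 of \cite{grotschel2012}). What you have written is a reconstruction of exactly that classical peeling proof, and its skeleton is sound: a vertex of a face of $\P$ is a vertex of $\P$; the identity $x_k = \frac{\mu_k}{1+\mu_k}v_k + \frac{1}{1+\mu_k}y_k$ checks out; the ray-shooting problem is a one-variable LP with an induced separation oracle; and the dimension of the carrying face drops by at least one per iteration, giving at most $N+1$ vertices.

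The one place where your argument is not yet a proof is the step you yourself flag as the main obstacle: extracting, in polynomial time, a valid inequality for $\P_k$ that is \emph{exactly tight} at $y_k$, so that $\P_{k+1}$ is a proper face containing $y_k$. Querying the oracle at $x_k+(\mu_k+\eps)d$ only guarantees a valid inequality whose slack at $y_k$ is small (at most $\eps\,\|d\|$ after normalization), not zero, and the oracle is free to return inequalities of arbitrary bit complexity, so ``round it to one that is supporting at $y_k$'' is precisely the hard part: in \cite{grotschel2012} this is done via simultaneous Diophantine approximation and is the technical heart of the chapter. As written, your proof defers its only nontrivial step to an unproved claim. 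If you want a self-contained argument that avoids this entirely, consider the standard alternative: the decomposition is the feasibility LP $\sum_v \lambda_v v = x$, $\sum_v \lambda_v = 1$, $\lambda \geq 0$ over the vertices $v$ of $\P$; its dual has polynomially many variables and exponentially many constraints, each of the form $w^T v + t \leq 0$, whose separation is linear optimization over $\P$ and hence available by Theorem~\ref{thm:sep}. Running the ellipsoid method on that dual identifies polynomially many vertices that already certify feasibility; restricting the primal to those vertices and taking a basic feasible solution yields at most $N+1$ of them. This route uses only Theorem~\ref{thm:sep} as a black box and needs no face or supporting-hyperplane machinery.
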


\section{The Model of Security Games}\label{sec:SG}
\subsection{Strategies and Payoff Structures}
A security game is a two-player game played between a \emph{defender} and an \emph{attacker}. The defender possesses multiple security resources and aims to allocate these resources to protect $n$ \emph{targets} (e.g., physical facilities, critical locations, etc.) from the attacker's attack. We use $[n]$ to denote the set of these targets. A \emph{defender pure strategy} is a subset of targets that is protected (a.k.a., \emph{covered}) in a feasible allocation of these resources. For example, the defender may have $k(<n)$ security resources, each of which can be assigned to protect any target. In this simple example, any subset of $[n]$ with size at most $k$ is a defender pure strategy. However, in practice, there are usually resource allocation constraints, thus not all such subsets correspond to feasible allocations. We will provide more examples in Section \ref{sec:SG:comb}.    

A more convenient representation of a pure strategy, as will be used throughout the paper, is a \emph{binary} vector $\mathbf{e} \in \{0,1\}^n$,  in which the entries of value $1$ specify the covered targets. Let $\E \subseteq \{0,1\}^n$ denote the set of all defender pure strategies. Notice that $\E$ also represents a \emph{set system}. The size of $\E $ is very large, usually exponential in the number of security resources. In the example mentioned above, $|\E| = \Omega(n^k)$ which is exponential in $k$. Therefore,  computational efficiency in security games means time polynomial in $n$, \emph{not} \emph{$|\E|$}. 
A defender mixed strategy is a distribution $\mathbf{p}$ over the elements in $\E$. The attacker chooses one target to attack, thus an \emph{attacker pure strategy} is a target $i \in [n]$. We use $\mathbf{y} \in \Delta_n$ to denote an attacker mixed strategy where $y_i$ is the probability of attacking target $i$.

The payoff structure of the game is as follows: given that the attacker attacks target $i$, the defender gets a reward $r_i$ if target $i$ is covered or a cost $c_i$ if $i$ is uncovered; while the attacker gets a cost $\zeta_i$ if target $i$ is covered or a reward $\rho_i$ if $i$ is uncovered.
Both players have utility $0$ on the other $n-1$ unattacked targets. 
A crucial structure of security games is summarized in the following assumption: $r_i > c_i$ and $\rho_i > \zeta_i$ for all $i \in [n]$. That is, covering a target is strictly beneficial to the defender than uncovering it; and the attacker prefers to attack a target when it is uncovered.\footnote{In practice, the attacker can also choose to not attack. This can be incorporated into the current model by adding a dummy target. Therefore, we will not explicitly consider the case here.} We formalize the model of security games in the following definition.
\begin{definition}\label{def:SG}[{\bf Security Game}] 
	A security game $\G$ with $n$ targets is given by the following tuple $( \mathbf{r},\mathbf{c},\mathbf{\rho},\mathbf{\zeta},\E )$ and satisfies $r_i > c_i$ and $\rho_i > \zeta_i$ for all $i \in [n]$. The security game is  \emph{zero-sum} if $r_i + \zeta_i = 0$ and $c_i + \rho_i = 0$ for all $i \in [n]$.
\end{definition}
We denote a security game by $\G(\mathbf{r},\mathbf{c},\mathbf{\rho},\mathbf{\zeta},\E )$. When the game is  zero-sum, we also use  $\G(\mathbf{r},\mathbf{c},\E)$ for short. We are interested in solving security games \emph{over} $\E$, by which we mean security games with arbitrary payoff structures, but a fixed set $\E$ of defender pure strategies. 
The defender's utility, as a function of the defender pure strategy $\mathbf{e}$ and attacker pure strategy $i$, can be formally expressed as 
\begin{equation*}
U^d(\mathbf{e},i)
= r_i \cdot e_{i}+ c_i \cdot (1 - e_{i}),
\end{equation*}
where $e_{i}$ is the $i$'th entry of $\mathbf{e}$. Given a defender mixed strategy $\mathbf{p} \in \Delta_{|\E|}$ and attacker mixed strategy $\mathbf{y} \in \Delta_n$, we use $U^d(\mathbf{p},\mathbf{y})$ to denote the defender's expected utility, which can be expressed as
\begin{equation} \label{eq:defU}
\begin{array}{lll}
U^d(\mathbf{p},\mathbf{y}) &=& \sum_{\mathbf{e} \in \E} \sum_{i=1}^n p_e y_iU^d(\mathbf{e},i) \\
&=& \sum_{\mathbf{e} \in \E} \sum_{i=1}^n p_ey_i \bigg(  r_i \cdot e_{i} + c_i \cdot  (1 - e_{i}) \bigg) \\
& = &  \sum_{i=1}^n y_i \bigg(  r_i \cdot \sum_{\mathbf{e} \in \E} p_e e_{i} + c_i \cdot (1 - \sum_{\mathbf{e} \in \E} p_e e_{i}) \bigg) \\
& = &  \sum_{i=1}^n y_i \bigg(  r_i \cdot x_i + c_i \cdot [1 - x_i] \bigg)
\end{array}
\end{equation}
where 
\begin{equation}\label{eq:marginalDef}
x_i = \sum_{\mathbf{e} \in \E} p_e e_{i} \in [0,1]
\end{equation} is the \emph{marginal} coverage probability of target $i$. Let $\mathbf{x}=(x_1,...,x_n)^T$ denote the marginal probability for all targets induced by the mixed strategy $\mathbf{p}$.  Notice that the marginal probability induced by a pure strategy $\mathbf{e}$ is precisely $\mathbf{e}$ itself.\footnote{Here we assume security forces have perfect  protection effectiveness. That is, once a target is covered, regardless by one or multiple resources, it is fully protected with probability $1$. Generalization to nonperfect effectiveness is straightforward.}   Equation \eqref{eq:defU} shows that the defender's expected utility can be compactly expressed as the bilinear form $\mathbf{x}^TA\mathbf{y} + \alpha \cdot \mathbf{y}$ for some \emph{non-negative} diagonal matrix $A$, where $\mathbf{x}$ is the marginal probability induced by the defender mixed strategy.  We note that the convex hull of $\E$ forms a polytope $\P=\{ \mathbf{x}: \mathbf{x} = \sum_{\mathbf{e} \in \E} p_e \cdot \mathbf{e} , \, \forall \mathbf{p} \in \Delta_{|\E|} \}$ which consists of all the \emph{feasible} (i.e., implementable by a defender mixed strategy) marginal probabilities.\footnote{One type of security games that is not captured by this model is the network interdiction game \cite{washburn1995two,tsai2010}, in which the defender chooses edges to defend and the attacker chooses a path to attack. The task of the defender is to interdict the attacker at a certain edge. The utility functions in this game do not have the bilinear structure, and are non-convex in general.  This belongs to the more general class of succinctly represented games with non-linear payoffs, which is EXP-hard to solve even in zero-sum cases \cite{feigenbaum95}.}    

For the rest of this paper, we will simply interpret a point $\mathbf{x} \in \P$ as a mixed strategy, and instead write the defender's utility as $U^d(\mathbf{x},\mathbf{y})$. Similarly, the attacker's expected utility can be compactly represented in the following form. We note that $U^a(\mathbf{x},\mathbf{y})$ also has the bilinear form $\mathbf{x}^TB\mathbf{y} + \beta \cdot \mathbf{y}$ for some \emph{non-positive} diagonal matrix $B$.
\begin{equation}\label{eq:attU}
U^a(\mathbf{x},\mathbf{y}) = \sum_{i=1}^n y_i \bigg(    \rho_i \cdot [1-x_i] + \zeta_i \cdot x_i\bigg).
\end{equation}

\subsection{Equilibrium Concepts}\label{sec:SG:equ}
Many security games, including some deployed systems \cite{an2012,yin2012}, are modeled as zero-sum games. That is, the defender's reward [cost] is the negative of the attacker's cost [reward]. For example, in the deployed security system for patrolling proof-of-payment metro-systems \cite{yin2012}, the defender aims to catch fare evaders at metro stations. This game is naturally zero-sum: the evader's cost of paying a fine is the defender's reward of catching the evader, while the ticket price is the evader's reward and the defender's cost if failing to catch the evader.  In zero-sum games, all standard equilibrium concepts are payoff-equivalent to the  \emph{minimax equilibrium}, and our goal is to compute the minimax equilibrium in $\poly(n)$ time.

When the game is \emph{not} zero-sum, the main solution concept adopted in the literature of security games is the \emph{strong Stackelberg equilibrium} ({\bf SSE}) \cite{Stackelberg,Stengel}. In particular, the defender plays the role of the \emph{leader} and can \emph{commit} to a mixed strategy before the attacker moves. The attacker observes the defender's mixed strategy and best responds. This is motivated by the consideration that the attacker usually does surveillance before committing an attack, thus is able to observe the empirical distribution of the defender's patrolling strategy \cite{tambe2011}.   In this case, our goal is to compute the optimal mixed strategy for the defender to commit to (the attacker's best response problem is usually trivial). Notice that, the attacker is not able to observe the defender's real-time deployment (i.e., the sampled pure strategy) since he has to plan the attack before the defender's real-time pure strategy is sampled.

The strong Stackelberg equilibrium (SSE) is appropriate only when the attacker does surveillance and can indeed observe the defender's past actions. However, in many cases the attacker does little surveillance. In fact, even if the attacker intends to do surveillance, sometimes he cannot observe the defender's strategies due to limited resources and confidentiality of the defender's resource allocation (e.g., plainclothes police). In these settings, the defender cannot commit to a strategy, thus  \emph{Nash equilibrium} ({\bf NE}) serves as a more appropriate solution concept. Simultaneous-move security game models are particularly common for modeling interactions with terrorism, partially due to the fact that the defender's actions are confidential in such settings (see, e.g., \cite{major2002advanced,sandler2003terrorism,sandler2005counterterrorism,bier2007choosing}).  In networked information systems, the interaction between the defender (system protector) and attacker (malware) is usually modeled as a simultaneous-move security game as well since malwares typically do not analyze past system behaviors. The goal is usually to compute  some particular (e.g., best or worst) Nash equilibrium \cite{mavronicolas2005,mavronicolas2006}.  

\subsection{Security Games \& Combinatorial Optimization}\label{sec:SG:comb}
The main theme of this paper is to build connections between combinatorial optimization and equilibrium computation in security games. In particular, we consider the following combinatorial problem.

\begin{problem}[Defender Best Response ({\bf DBR})]\label{prob:DBR}
	For any \emph{non-negative} weight vector $\mathbf{w} \in \mathbb{R}_+^n$, compute 
	\begin{equation*}
	\mathbf{e}^*  = \arg \max_{\mathbf{e} \in \E} [\mathbf{w}\cdot \mathbf{e}].
	\end{equation*} 
	The DBR problem \emph{over} $\E$ is to compute $\arg \max_{\mathbf{e} \in \E} [\mathbf{w}\cdot \mathbf{e}]$ for any input $\mathbf{w} \in \mathbb{R}_+^n$.
\end{problem}
\noindent In other words, the DBR problem is to compute a defender pure strategy that maximizes the total weight it ``collects". 
We claim that Problem \ref{prob:DBR} is precisely the defender's best response problem to an arbitrary attacker mixed strategy. To see this, given any attacker mixed strategy $\mathbf{y}$, we have $U^d(\mathbf{x},\mathbf{y}) = \sum_{i=1}^n x_i \big( y_i [ r_i - c_i] \big) + \sum_{i=1}^n y_ic_i$ for any $\mathbf{x} \in \P$. Let $w_i = y_i [ r_i - c_i] \geq 0$, then the defender's best response to $\mathbf{y}$ is $\arg \max_{\mathbf{x} \in \P} \mathbf{x} \cdot \mathbf{w} = \arg \max_{\mathbf{e} \in \E} \mathbf{e} \cdot \mathbf{w}$. Conversely, given any $\mathbf{w} \in \RR^n_+$, it is easy to find an attacker mixed strategy $\mathbf{y} \in \Delta_n$ such that $y_i(r_i-c_i)$ is proportional to $w_i$ for all $i \in [n]$, making Problem \ref{prob:DBR} equivalent to the defender's best response to $\mathbf{y}$.   Notice that the DBR problem is a combinatorial optimization problem over the set system $\E$. The difference among various security game models  essentially lies in the structure of $\E$. Next, we illustrate how some typical DBR problems relate to standard combinatorial problems.  

{\bf Uniform Matroid.} In simple security settings, the defender has a certain number of security resources, say $k$ resources; each resource can be assigned to protect any (one) target -- i.e., no allocation constraints. Therefore, any subset of $[n]$ of size at most $k$ is a defender pure strategy. In this case, $\E$ is a uniform matroid and the DBR problem is simply to find the largest $k$ weights. 
The deployed security system for LAX airport checkpoint placement -- one of the earliest applications of security games -- is captured by this model \cite{pita2008}.     

{\bf Bipartite Matching.} A natural generalization of the uniform matroid case is that the resource allocation is constrained. In this case, the defender has $k$ heterogeneous resources, and each resource can only be assigned to some particular targets associated with that resource. This naturally models several types of scheduling constraints in practice. For example, due to geographic constraints, policemen from a certain police station can only patrol the area around that station. Also, different types of security forces specialize in protecting different types of targets. Here, feasible assignments can be modeled as edges of a bipartite graph with security resources on one side and targets on another side. A defender pure strategy corresponds to a bipartite matching, and the DBR problem is to compute the maximum weighted bipartite matching.

{\bf Coverage Problems.} In some domains, one security resource can cover several targets. One deployed real-world example is the federal air marshal scheduling problem, where one air marshal is assigned to protect several flights, but constrained on that the arrival destination of any former flight should be the starting point of the next flight \cite{Tsai09a}. In other words, each security resource (i.e., air marshal) can protect a subset of targets (i.e., flights). Therefore, each pure strategy is the union of targets covered by each security resource. The DBR problem in this case is the maximum weighted coverage problem. Another natural example is to protect targets distributed on the plane and each security guard can cover a region of certain size. The DBR problem here is a 2-dimensional geometric maximum coverage problem.\footnote{For more information about geometric coverage, see the thesis  \cite{leeuwen2009} and references therein.} Other examples include patrolling on a graph (e.g., street map or network systems) in which a patroller at a node can protect all adjacent edges or a patroller on an edge can protect its two end nodes. The DBR problem here is the vertex or edge coverage problem. 

{\bf Min-Cost Flow.} Many security games are played out in \emph{spatio-temporal} spaces. For example, the deployed security system in \cite{fang2013optimal} helps to schedule the patrol boats of the US Coast Guard to protect the (moving) Staten Island ferries every day. Wildlife protection is another example \cite{Fang16a}. One common way to handle such settings is to discretize the space and build a 2-D -- spatial and temporal dimension --  grid, and patrol the discrete \emph{(space, time)} points (see Figure \ref{fig:flow}). However, starting from a position at time $t$, the positions that a security resource can possibly reach at time $t+1$ are restricted due to various constraints like speed limit, terrain barriers, etc. For example, the move highlighted by red in Figure \ref{fig:flow} may be infeasible since a patroller can not move that far within a small time period due to speed limit, while the blue-colored moves are feasible. This can be modeled by adding edges between time layers to indicate feasible moves. 
The patrolling schedule for each security resource corresponds to a path across all time layers, which specifies the position this resource covers at each time point (see the blue path in Figure \ref{fig:flow}). The DBR problem is, for any given non-negative weights at each \emph{(space, time)} point, computing $k$ paths for the $k$ resources to maximize the total weights they cover. This can be converted to a min-cost max flow problem (with negative costs). 
\begin{figure}
	\vspace{-10mm}
	\centering
	\includegraphics[bb=70bp 180bp 750bp 600bp,clip,scale=0.35]{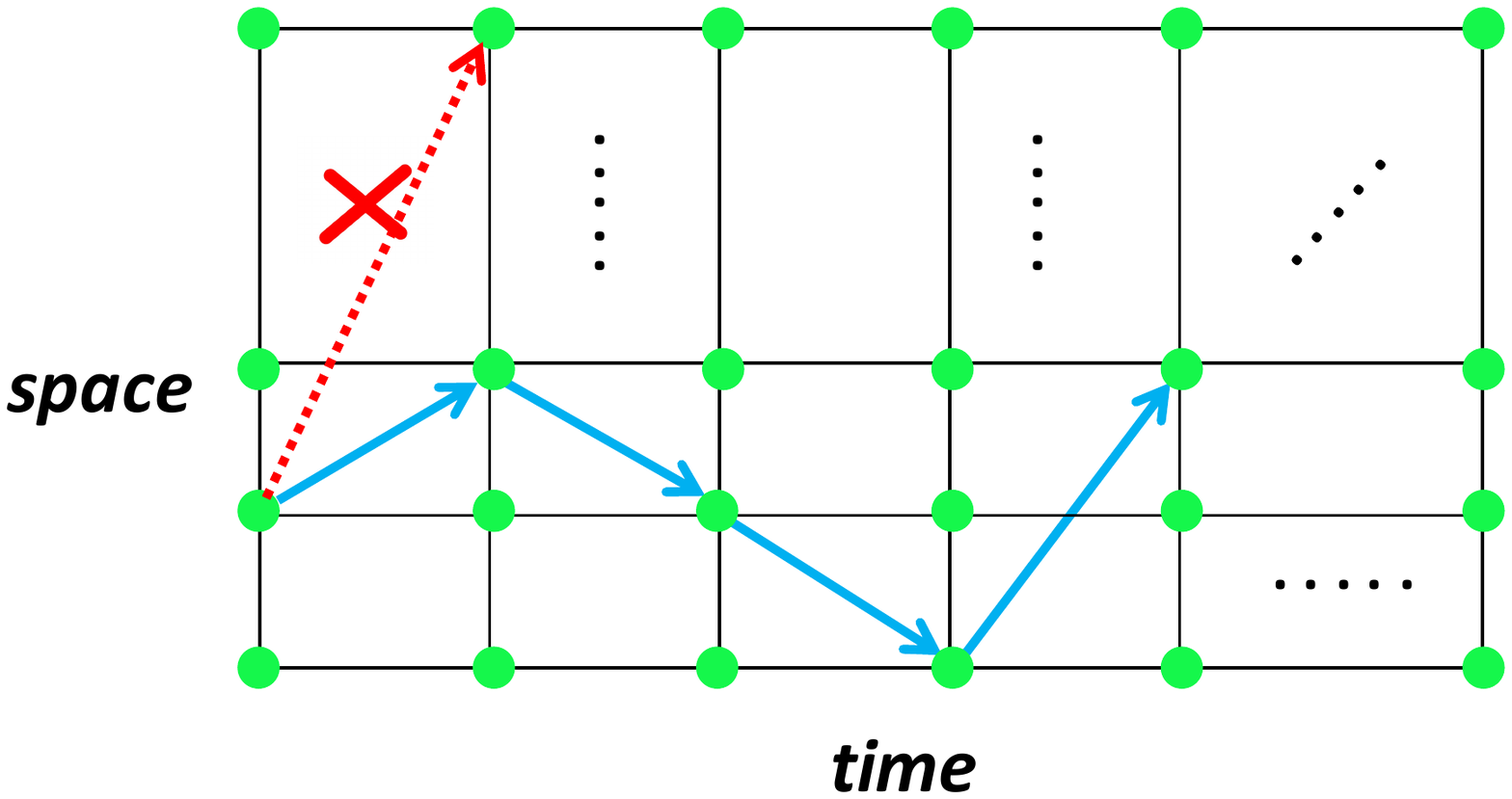}
	\caption{\label{fig:flow} Feasible (blue) and infeasible (red) moves in  spatio-temporal spaces.}
\end{figure}

{\bf Packing Problems.} Our last example is motivated by recent work on optimizing the allocation of security resources for passenger screening for the Transportation Security Administration
(TSA) \cite{Brown16a}. Consider an airport with $n$ flights and flight $i$ has $m_i$ passengers. The TSA has several screening \emph{tools}, e.g., x-ray, walk-through metal detector, chemicals, etc.,  and each screening tool has a capacity of the maximum number of passengers it can check. In contrast to the case of coverage problems where each resource can protect several targets, here several tools are required to screen one passenger. More precisely, each passenger is screened by a screening \emph{team} which is a combination of several screening tools.  
By protecting the flight from a certain passenger we mean a passenger of that flight is screened, and identified as an attacker if he is, by a screening team.\footnote{ In \cite{Brown16a}, each screening team has an effectiveness factor denoting the probability a team can identify an attacker.  The setting here is slightly simplified with perfect effectiveness factor $1$. Nevertheless, it still captures the core difficulty of the problem.} The DBR problem is to, given non-negative weight $w_i$  for passengers in flight $i$, allocate as many passengers as possible to teams for screening, subject to each screening tool's capacity constraint, so that the total weight of  screened passengers are maximized. This is a very general packing problem. In fact, the reader may easily verify that even when $w_i = 1$ and $m_i=1$ for any flight $i$, the problem encodes the independent set problem. 

\begin{remark}
	We note that the list of examples above is by no means complete -- there are various other security settings with other combinatorial structures. For example, there is also study on budget constraints for acquiring security resources (e.g., \cite{bhattacharya2011}), which induces the budgeted version of the above combinatorial problems. In fact, real domains are usually more complicated with various types of constraints, involving intersections of these combinatorial structures. 
\end{remark}

\section{Solving Zero-Sum Security Games is a Combinatorial Problem}\label{sec:zero}
In this section, we focus on zero-sum security games. Recall that we use $n$ to denote the total number of targets, and reward $r_i$ [cost $c_i$] to denote the defender's utility of covering [uncovering] target $i$ when it is attacked. The defender seeks to maximize her utility while the attacker seeks to minimize the defender's utility. We are interested in computing the minimax equilibrium for security games over $\E$ in $\poly(n)$ time. More specifically, we seek to understand how the computational complexity of the minimax equilibrium relates to the complexity of the DBR problem. By convention, we sometimes call an algorithm for solving the DBR problem a \emph{DBR oracle}. We prove the following equivalence theorem.
\begin{theorem}\label{thm:main}
	There is a $\poly(n)$ time algorithm to compute
	the minimax equilibrium for zero-sum security games  over $\E$, \emph{if and only if} there is a $\poly(n)$ time algorithm to solve the $DBR$ problem over $\E$.
\end{theorem}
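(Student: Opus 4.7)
The plan is to prove the two implications separately, leveraging LP duality together with the Grötschel--Lovász--Schrijver equivalences (Theorems \ref{thm:sep} and \ref{thm:mem}). Two polytopes will do the work: the defender's strategy polytope $\P = \text{conv}(\E) \subseteq [0,1]^n$ and its downward closure $\P^{\downarrow} = \{y \in \RR^n_+ : y \leq x \text{ componentwise for some } x \in \P\}$. The crucial monotonicity observation linking them is that DBR is linear optimization over $\P$ with a non-negative weight, which coincides with linear optimization over $\P^{\downarrow}$ for the same weight: for $w \geq 0$, $\max_{y \in \P^{\downarrow}} w \cdot y = \max_{x \in \P} w \cdot x = \max_{e \in \E} w \cdot e$.

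For the direction ``DBR $\Rightarrow$ minimax,'' I would write the defender's LP (variables $\mathbf{p} \in \Delta_{|\E|}$ and scalar $u$) and take its dual, which has only $n+1$ variables $(y, v)$ but one constraint
\begin{equation*}
v \geq \sum_i y_i c_i + \sum_i y_i(r_i - c_i) e_i
\end{equation*}
for each $e \in \E$. For a candidate $(y_0, v_0)$, the most violated constraint is identified by $\max_{e \in \E} w \cdot e$ with $w_i = y_{0,i}(r_i - c_i) \geq 0$, i.e.\ a single DBR call. Plugging this into Theorem \ref{thm:sep} solves the dual in polynomial time, yielding the game value and the attacker's $y^*$. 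I would then recover the defender's $\mathbf{x}^*$ by collecting the polynomially many pure strategies $\tilde \E \subseteq \E$ that the ellipsoid method produced as cutting planes and solving the primal restricted to $\tilde \E$, which is a polynomial-size LP whose optimum matches the full primal by the column-generation principle.

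For ``minimax $\Rightarrow$ DBR,'' the plan is to bootstrap the minimax oracle into a membership oracle for $\P^{\downarrow}$. Given any query point $p \in [0,1]^n$, construct the zero-sum security game with $r_i - c_i = 1$ and $c_i = -p_i$ (valid since $r_i > c_i$ trivially holds); the associated minimax LP is precisely
\begin{equation*}
\max \{u : \exists x \in \P \text{ with } x_i \geq u + p_i \text{ for all } i\},
\end{equation*}
so the returned value $u^*$ is $\geq 0$ iff $p \in \P^{\downarrow}$. Points outside $[0,1]^n$ are trivially non-members. Since $\mathbf{0} \in \P^{\downarrow}$ is a known point, Theorem \ref{thm:mem} upgrades this membership oracle to polynomial-time linear optimization over $\P^{\downarrow}$; for any $w \geq 0$, that optimum equals $\max_{e \in \E} w \cdot e$ by the monotonicity observation above, delivering a polynomial-time DBR oracle.

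The step I expect to require the most care is this reverse direction. At first glance the minimax LP's ``min over targets'' payoff structure looks incompatible with DBR's additive $w \cdot e$ objective, and one might suspect equilibrium computation is strictly easier than combinatorial optimization over $\E$. The idea that breaks the deadlock is to stop asking the oracle to compute a sum and instead use it as a purely geometric feasibility routine---``does $\P$ dominate the point $p$?''---by encoding $p$ into the cost vector via $c_i = -p_i$; after that, membership in $\P^{\downarrow}$, GLS, and the monotonicity equivalence between optimizing over $\P$ and over $\P^{\downarrow}$ for non-negative directions close the loop.
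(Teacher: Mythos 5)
Your overall architecture coincides with the paper's: the forward direction is the same primal--dual/separation-oracle argument, and the reverse direction passes, exactly as the paper does, through the downward closure of $\P$ (your $\P^{\downarrow}$ is the paper's relaxed polytope $\hat{\P}$, the convex hull of all ``sub pure strategies''), a membership oracle for it built from the game value, and Theorem \ref{thm:mem}. Your membership gadget ($r_i - c_i = 1$, $c_i = -p_i$, test whether the value is $\geq 0$) is in fact a cleaner, uniform version of the paper's construction ($r_i = 1/x_i$ with a case split at $x_i=0$, test whether the value is $\geq 1$); both work because the maximin value is $\max_{x\in\P}\min_i\bigl(x_i r_i + (1-x_i)c_i\bigr)$ and the relaxed polytope is down-monotone.

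There is, however, one genuine gap at the very last step of the reverse direction. The DBR problem (Problem \ref{prob:DBR}) asks for an optimal \emph{element of $\E$}, not merely the optimal value, and Theorem \ref{thm:mem} hands you a vertex optimal solution of $\max_{y\in\P^{\downarrow}} w\cdot y$ --- i.e.\ a vertex of $\P^{\downarrow}$, which is a sub pure strategy and need not lie in $\E$. When $w$ is strictly positive every optimal vertex is automatically a maximal element of the relaxed set and hence does lie in $\E$, but $w\in\RR^n_+$ is allowed to have zero entries, and then the returned vertex can be an infeasible strict sub pure strategy (e.g.\ a one-way leg of an air marshal's round trip). The paper closes this by optimizing the perturbed objective $w\cdot x + \eps(\mathbf{1}\cdot x)$ with $\eps = 2^{-q(n)}/2n$, where $q(n)$ is the bit complexity of $w$: the perturbation forces every optimal vertex to be a genuine pure strategy while being too small to change which pure strategies are optimal for $w$. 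You need this (or an equivalent strictly positive perturbation of $w$) to actually deliver a DBR oracle; everything else in your proposal goes through.
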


Since security games are essentially bilinear games, it is not surprising that the minimax equilibrium can be reduced an the DBR problem, and the reduction  follows a standard primal-dual argument. 
What is interesting, however, is the reverse direction -- i.e., solving the DBR problem is no harder than solving zero-sum security games. The minimax equilibrium, as an optimization problem,  has a special objective function. It is not clear that such a special objective can be as hard as optimizing an arbitrary non-negative objective over $\E$.  Moreover, security games are very special bilinear games: (i) the attacker's mixed strategy set is a simplex; (ii) the defender's  payoff matrix is non-negative and diagonal; (iii) the attacker's  payoff matrix is non-positive and diagonal. Thus unsurprisingly, there have been attempts in the literature on solving security games without going through the DBR problem. Indeed, various other techniques have been employed to tackle security games, e.g., generalized Birkhoff-von Neumann theorem \cite{budish2013}, techniques from convex and non-convex optimization, etc.  However, the message conveyed in Theorem \ref{thm:main} is that, to solve security games in polynomial time, it is not only sufficient, but also necessary, to solve the DBR problem in polynomial time. We now provide a formal proof of Theorem \ref{thm:main}. 

\begin{figure}
	\centering
	\begin{minipage}{.5\textwidth}
		\centering
		\begin{lp*}
			\maxis{u}
			\sts
			\qcon{x_i r_i + (1-x_i)c_i \geq u}{i =1,2,...,n}
			\con{ \sum_{\mathbf{e} \in \E} p_e \cdot \mathbf{e}  = \mathbf{x}}
			\con{\sum_{\mathbf{e} \in \E} p_e = 1}
			\qcon{p_e \geq 0}{\mathbf{e} \in \E}
		\end{lp*}
		\vspace{-8pt}
		\captionof{figure}{The defender's maximin strategy\\ (variables: $x_i$, $p_e$ and $u$)}
		\label{lp:ZeroSumSSE}
	\end{minipage}%
	\begin{minipage}{.5\textwidth}
		\centering
		\begin{lp*}
			\minis{\sum_{i=1}^n c_i y_i + r}
			\sts
			\qcon{r - \mathbf{e} \cdot \mathbf{w}  \geq 0}{\mathbf{e} \in \E}
			\qcon{ (r_i - c_i)y_i = w_i }{ i  = 1,2, ...,n}
			\con{\sum_{i =1}^n y_i = 1}
			\qcon{y_i \geq 0}{i=1,2,...,n}
		\end{lp*}
		\vspace{-8pt}
		\captionof{figure}{The attacker's minimax strategy \\  (variables: $y_i$, $w_i$ and $r$)}
		\label{lp:ZeroSumDual}
	\end{minipage}%
\end{figure}

\subsection*{Reducing Minimax Equilibrium to DBR}
Figure \ref{lp:ZeroSumSSE} and \ref{lp:ZeroSumDual} exhibit the linear programs for computing the defender's maximin strategy and the attacker's minimax strategy. They are primal and dual of each other. We observe that the linear program in Figure \eqref{lp:ZeroSumDual} admits a $poly(n)$ time separation oracle, if the DBR problem can be solved in $\poly(n)$ time.  In particular, given any $r$ and $w_i, y_i$ for $i\in [n]$, we can explicitly check the last three constraints in $poly(n)$ time. To check whether they satisfy the first constraint, we run the DBR oracle which outputs a pure strategy $\mathbf{e}^*$. If   $r < \mathbf{e}^* \cdot \mathbf{w} $, then the first constraint is violated at $\mathbf{e} = \mathbf{e}^*$. Otherwise, $r \geq \mathbf{e}^* \cdot \mathbf{w} = \max_{\mathbf{e} \in \E} \geq  \mathbf{e} \cdot \mathbf{w}$ for any $\mathbf{e} \in \E$. Therefore, the first constraint is satisfied. To sum up, the linear program in Figure  \eqref{lp:ZeroSumDual}  admits a $poly(n)$ time separation oracle. By Theorem \ref{thm:sep}, both  LPs  can be solved in $poly(n)$ time. We note that the (efficiently) computed optimal solution of the LP in Figure \eqref{lp:ZeroSumSSE} will have a support of $poly(n)$ size. In fact, Theorem \eqref{thm:decompose} guarantees that there exists a defender maximin strategy that mixes over at most $n+1$ pure strategies, and it can be computed efficiently.

\subsection*{Reducing DBR to Minimax Equilibrium}	 
Recall that all the feasible marginal coverage probabilities form the polytope $\P$, and the vertices of $\P$ form the set $\E$ of defender pure strategies.  We start with a simple fact of linear programing, which states that any linear program achieves optimality at some vertex of its feasible region (if non-empty). 
\begin{fact}\label{fact}
	$\max_{\mathbf{e} \in \E} [\mathbf{w}\cdot \mathbf{e}] = \max_{\mathbf{x} \in \mathcal{P}} [\mathbf{w}\cdot \mathbf{x} ]$, where $\P$ is the convex hull of $\E$. 
\end{fact}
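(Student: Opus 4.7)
The plan is to establish the equality by proving both inequalities separately. The easy direction is $\max_{\mathbf{e}\in\E}[\mathbf{w}\cdot\mathbf{e}] \leq \max_{\mathbf{x}\in\P}[\mathbf{w}\cdot\mathbf{x}]$: since every element of $\E$ is by construction an element of its own convex hull $\P$, the left-hand side is a maximum of the same objective $\mathbf{w}\cdot(\cdot)$ taken over a subset of the feasible region of the right-hand side, so it cannot exceed it.

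For the reverse direction, I would unfold the definition of $\P$ as the convex hull of $\E$. Any $\mathbf{x}\in\P$ admits a representation $\mathbf{x} = \sum_{\mathbf{e}\in\E}\lambda_{\mathbf{e}}\cdot\mathbf{e}$ with $\lambda_{\mathbf{e}}\geq 0$ and $\sum_{\mathbf{e}\in\E}\lambda_{\mathbf{e}}=1$. Because $\mathbf{w}\cdot\mathbf{x}$ is linear in $\mathbf{x}$, this gives $\mathbf{w}\cdot\mathbf{x} = \sum_{\mathbf{e}\in\E}\lambda_{\mathbf{e}}\,(\mathbf{w}\cdot\mathbf{e})$, which is a convex combination of the scalar values $\{\mathbf{w}\cdot\mathbf{e}\}_{\mathbf{e}\in\E}$ and hence is bounded above by their maximum. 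Since this bound holds pointwise for every $\mathbf{x}\in\P$, it survives taking the supremum, yielding $\max_{\mathbf{x}\in\P}[\mathbf{w}\cdot\mathbf{x}] \leq \max_{\mathbf{e}\in\E}[\mathbf{w}\cdot\mathbf{e}]$.

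There is no real obstacle here: this is a textbook fact about maximizing a linear function over a polytope, essentially saying that any LP over $\P$ attains its optimum at a vertex. The only minor subtlety to be mindful of is that $\E$ is in principle only the vertex set of $\P$ rather than all of $\P$; but because the representation of points in $\P$ as convex combinations of elements of $\E$ is guaranteed by the very definition of $\P$ as $\conv(\E)$, no separate vertex argument is needed. The purpose of isolating this fact at this point in the paper is to prepare for the reduction from \textbf{DBR} to minimax equilibrium: solving the minimax LP in Figure~\ref{lp:ZeroSumSSE} returns a marginal vector $\mathbf{x}\in\P$, and the fact assures us that maximizing $\mathbf{w}\cdot\mathbf{x}$ over the continuous relaxation $\P$ gives the same optimal value as the combinatorial problem over $\E$, so no gap is introduced by this relaxation.
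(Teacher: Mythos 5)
Your proof is correct and is exactly the standard argument the paper has in mind: the paper states this as a ``simple fact of linear programming'' without proof, appealing to vertex optimality of LPs, and your two-inequality argument ($\E \subseteq \P$ for one direction, linearity over a convex combination for the other) is the canonical way to fill that in. No issues.
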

By Fact \ref{fact}, solving the DBR problem is equivalent to solving the linear program $ \max_{\mathbf{x} \in \mathcal{P}} [\mathbf{w}\cdot \mathbf{x} ]$.  The main technical step of our proof is to reduce the membership checking problem for polytope $\mathcal{P}$ to the computation of minimax equilibrium. Therefore,  computing the minimax equilibrium will allow us to find a $\poly(n)$ time membership oracle for $\mathcal{P}$.   By the polynomial-time equivalence between membership checking and optimization (Theorem \ref{thm:mem}), we can conclude that the DBR problem can also be solved in $\poly(n)$ time.
Unfortunately, it turns out that membership checking for polytope $\mathcal{P}$ cannot be easily reduced to the computation of minimax equilibrium because some $\mathbf{x} \in \P$ are entry-wise dominated by (i.e., entry-wise \emph{less} than) other $\mathbf{x}' \in \P$, so that $\mathbf{x}$ can never be a defender equilibrium strategy.  To overcome this barrier, we first relax the polytope $\mathcal{P}$ and work on a broader set $\hat{\P}$ of marginal probabilities. We then make use of the non-negativity of weights  in the DBR problem to transfer linear  optimization over $\hat{\P}$ back to linear optimization over $\P$.

We first introduce some notations. We call $\hat{\mathbf{e}} \in \{0,1\}^n$ a \emph{sub pure strategy} (of $\mathbf{e}$) if there exists an $\mathbf{e} \in \E$ such that $\hat{e}_i \leq e_{i}$ for all $i \in [n]$. Equivalently, the covered targets of a sub pure strategy $\hat{\mathbf{e}}$ is a subset of the covered targets of some real pure strategy $\mathbf{e}$. For example, the all zero vector $\mathbf{0}$ is  always a sub pure strategy; any pure strategy $\mathbf{e} \in \E$ is also a sub pure strategy (of itself). Notice that sub pure strategies are not necessarily feasible. For example, in the air marshal scheduling problem, a feasible schedule for an air marshal has to be a round trip, not a one-way flight. We now define the \emph{relaxed} set of pure strategies 
\begin{equation*}
\hat{\E} = \{   \text{ all $\mathbf{e} \in \E$ and all their sub pure strategies } \}
\end{equation*}
to be the set of $\E$ augmented with all sub pure strategies. The following lemma shows that, when considering the DBR problem, relaxing $\E$ to $\hat{\E}$ is without loss.
\begin{lemma}\label{lem:relaxE}
	$\max_{\mathbf{e} \in \E} [\mathbf{e} \cdot \mathbf{w}] = \max_{\hat{\mathbf{e}} \in \hat{\E}} [\hat{\mathbf{e}}\cdot \mathbf{w} ]$ for any $\mathbf{w} \in \RR_+^n.$
\end{lemma}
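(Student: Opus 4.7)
The plan is to prove the two inequalities separately, both of which are almost immediate from the definitions together with the non-negativity assumption on $\mathbf{w}$.

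First, I would observe that $\E \subseteq \hat\E$, since every pure strategy $\mathbf{e} \in \E$ is a sub pure strategy of itself (take $\hat{\mathbf{e}} = \mathbf{e}$ in the definition). This immediately gives $\max_{\mathbf{e} \in \E} [\mathbf{e}\cdot \mathbf{w}] \le \max_{\hat{\mathbf{e}} \in \hat\E} [\hat{\mathbf{e}}\cdot \mathbf{w}]$ for any $\mathbf{w}$, with no restriction on sign.

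For the reverse inequality, I would pick an arbitrary $\hat{\mathbf{e}} \in \hat\E$ and use the definition of $\hat\E$ to obtain some $\mathbf{e} \in \E$ with $\hat e_i \le e_i$ for every $i \in [n]$. The key step is the pointwise comparison: since $w_i \ge 0$ and $e_i - \hat e_i \ge 0$ for each $i$, we have $\mathbf{e}\cdot \mathbf{w} - \hat{\mathbf{e}}\cdot \mathbf{w} = \sum_{i=1}^n (e_i - \hat e_i)\, w_i \ge 0$. Hence $\hat{\mathbf{e}}\cdot \mathbf{w} \le \mathbf{e}\cdot \mathbf{w} \le \max_{\mathbf{e}' \in \E}[\mathbf{e}'\cdot \mathbf{w}]$. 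Taking the maximum of the left-hand side over $\hat{\mathbf{e}} \in \hat\E$ yields the desired inequality.

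There is essentially no obstacle here; the non-negativity of $\mathbf{w}$ is precisely what is needed to rule out the pathology that dropping a covered target (going from $\mathbf{e}$ to some sub pure strategy) could increase the objective. I would just emphasize in the write-up that this lemma is what makes the later relaxation from $\E$ to $\hat\E$ lossless for the DBR problem, since the DBR problem by assumption only involves non-negative weights $\mathbf{w} \in \RR_+^n$.
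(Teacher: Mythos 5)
Your proof is correct and follows essentially the same route as the paper's: the inclusion $\E \subseteq \hat{\E}$ gives one inequality, and for the other you dominate each sub pure strategy by a real pure strategy and invoke the non-negativity of $\mathbf{w}$ (the paper applies this only to the maximizer $\hat{\mathbf{e}}^*$, while you apply it to an arbitrary $\hat{\mathbf{e}}$ and then take the maximum, which is an immaterial difference). No gaps.
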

\begin{proof}
	Since $\E \subseteq \hat{\E}$, we have $\max_{\mathbf{e} \in \E} [\mathbf{e} \cdot \mathbf{w}] \leq \max_{\hat{\mathbf{e}} \in \hat{\E}} [\hat{\mathbf{e}}\cdot \mathbf{w} ]$. To prove another direction, let $\hat{\mathbf{e}}^* = \arg \max_{\hat{\mathbf{e}} \in \hat{\E}} [\hat{\mathbf{e}}\cdot \mathbf{w} ]$. By definition, there exists $\mathbf{e}^* \in \E$ such that $\hat{\mathbf{e}}^*$ is a sub pure strategy of $\mathbf{e}^*$. Since $\mathbf{w} \in \RR_+^n$ is non-negative, we must have $\mathbf{e}^* \cdot \mathbf{w} \geq \hat{\mathbf{e}}^* \cdot \mathbf{w}$. Therefore, 
	\begin{equation*}
	\max_{\mathbf{e} \in \E} [ \mathbf{e} \cdot \mathbf{w}] \geq \mathbf{e}^* \cdot  \mathbf{w} \geq \hat{\mathbf{e}}^*  \cdot  \mathbf{w} = \max_{\hat{\mathbf{e}} \in \hat{\E}} [\hat{\mathbf{e}}\cdot \mathbf{w} ],		
	\end{equation*}
	which concludes the proof. 
\end{proof}
We now define the \emph{relaxed} polytope of marginal probabilities as follows.
\begin{equation*}
\hat{\mathcal{P}} = \{ \mathbf{x}: \mathbf{x} = \sum_{\hat{\mathbf{e}} \in \hat{\E}} p_{\hat{e}} \cdot \hat{\mathbf{e}} , \, \forall \mathbf{p} \in \Delta_{|\hat{\E}|} \}.
\end{equation*}
We next show that the DBR problem over $\E$ can be reduced to linear optimization (or equivalently, membership checking, due to Theorem \ref{thm:mem}) of this relaxed polytope $\hat{\P}$. It is easy to show that  the optimal objective value of the DBR problem is equal to $\max_{\mathbf{x} \in \hat{\P}} [\mathbf{x}\cdot \mathbf{w} ]$. However, the problem is that the optimal (vertex) solution to $\max_{\mathbf{x} \in \hat{\P}} [\mathbf{x}\cdot \mathbf{w} ]$, i.e., some $\hat{\mathbf{e}}^* \in \hat{\E}$, is a sub pure strategy, which may not be a feasible real pure strategy (since $\mathbf{w}$ may have zero-valued entries), but an algorithm for the DBR problem must return a feasible optimal solution. Nevertheless, this can be handled by properly regularizing the linear program $\max_{\mathbf{x} \in \hat{\P}} [\mathbf{x}\cdot \mathbf{w} ]$ to make sure that its optimal vertex solution is always feasible for the DBR problem over $\E$.
\begin{lemma}\label{lem:reduce1}
	The DBR problem over $\E$ reduces to membership checking for polytope $\hat{\P} $ in $\poly(n)$ time.
\end{lemma}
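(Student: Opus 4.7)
The plan is to use Theorem~\ref{thm:mem} to convert the given membership oracle for $\hat{\P}$ into a $\poly(n)$-time algorithm for linear optimization over $\hat{\P}$ that returns vertex optima; the initial point required by Theorem~\ref{thm:mem} is available for free because the all-zero vector $\mathbf{0}$ is a sub pure strategy of any $\mathbf{e}\in\E$ and hence lies in $\hat{\E}\subseteq\hat{\P}$. Combining Lemma~\ref{lem:relaxE} with Fact~\ref{fact} applied to $\hat{\E}$ and its convex hull $\hat{\P}$ gives
\[
\max_{\mathbf{e}\in\E}\mathbf{w}\cdot\mathbf{e} \;=\; \max_{\hat{\mathbf{e}}\in\hat{\E}}\mathbf{w}\cdot\hat{\mathbf{e}} \;=\; \max_{\mathbf{x}\in\hat{\P}}\mathbf{w}\cdot\mathbf{x},
\]
so solving the LP $\max_{\mathbf{x}\in\hat{\P}}\mathbf{w}\cdot\mathbf{x}$ at least yields the correct optimal DBR value. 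The obstacle, already flagged in the discussion preceding the lemma, is that a vertex optimum of this LP is some $\hat{\mathbf{e}}^*\in\hat{\E}$, which could be a strict sub pure strategy and hence an infeasible DBR answer.

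To force the LP solver to return an actual pure strategy, I will instead solve the regularized program $\max_{\mathbf{x}\in\hat{\P}}(\mathbf{w}+\epsilon\mathbf{1})\cdot\mathbf{x}$ for a carefully chosen small $\epsilon>0$. The intuition is that adding $\epsilon\mathbf{1}$ breaks ties in favor of sub pure strategies with the most $1$-entries, and any $\mathbf{w}$-maximizer with a maximal number of $1$-entries must actually be a pure strategy in $\E$. Concretely, let $V:=\max_{\hat{\mathbf{e}}\in\hat{\E}}\mathbf{w}\cdot\hat{\mathbf{e}}$, let $M\subseteq\hat{\E}$ be the set of $\mathbf{w}$-maximizers, and let $\delta>0$ denote the additive gap between $V$ and the largest $\mathbf{w}$-value attained outside $M$. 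Standard LP bit-complexity bounds give $\delta\ge 2^{-\poly(n,B)}$, where $B$ is the bit length of $\mathbf{w}$, so choosing $\epsilon:=\delta/(2n)$ has polynomial bit length. Since $\mathbf{1}\cdot\hat{\mathbf{e}}\in[0,n]$, this choice ensures every $\hat{\mathbf{e}}\notin M$ is strictly beaten by every $\hat{\mathbf{e}}\in M$ under the perturbed objective; restricted to $M$, the perturbed objective equals $V+\epsilon\,\mathbf{1}\cdot\hat{\mathbf{e}}$, so any perturbed maximizer also maximizes $\mathbf{1}\cdot\hat{\mathbf{e}}$ over $M$.

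To close the argument, I will show any $\mathbf{1}$-maximizer within $M$ must lie in $\E$: if some $\hat{\mathbf{e}}^*\in M\setminus\E$ were such a maximizer, then by definition it is a strict sub pure strategy of some $\mathbf{e}'\in\E$; non-negativity of $\mathbf{w}$ together with $\mathbf{e}'\ge\hat{\mathbf{e}}^*$ entry-wise yields $\mathbf{w}\cdot\mathbf{e}'\ge V$ and hence $\mathbf{e}'\in M$, while $\mathbf{1}\cdot\mathbf{e}'>\mathbf{1}\cdot\hat{\mathbf{e}}^*$, contradicting $\mathbf{1}$-maximality. Hence every vertex optimum of the perturbed LP lies in $\E$ and is a valid DBR answer. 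The main obstacle I expect is pinning down an explicit $\poly(n,B)$ lower bound on the gap $\delta$ so that the perturbed LP still runs in $\poly(n)$ time via the Gr\"otschel--Lov\'asz--Schrijver machinery when $\mathbf{w}$ has polynomial bit length; the combinatorial and LP-duality arguments are otherwise routine.
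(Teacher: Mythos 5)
Your proposal is correct and follows essentially the same route as the paper: the same chain of equalities via Lemma~\ref{lem:relaxE} and Fact~\ref{fact}, the same use of Theorem~\ref{thm:mem} with $\mathbf{0}\in\hat{\P}$ as the known point, and the same regularization $\mathbf{w}+\eps\mathbf{1}$ with $\eps$ chosen below the bit-complexity gap so that the perturbed vertex optimum is simultaneously a $\mathbf{w}$-maximizer and support-maximal, hence a genuine element of $\E$. The "obstacle" you flag at the end is resolved exactly as in the paper by taking $\eps = 2^{-q(n)}/2n$ where $q(n)$ is the bit length of $\mathbf{w}$, since distinct subset sums of rationals with common denominator at most $2^{q(n)}$ differ by at least $2^{-q(n)}$.
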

\begin{proof}
	By Lemma \ref{lem:relaxE} and Fact \ref{fact}, we have $\max_{\mathbf{e} \in \E} [\mathbf{e} \cdot \mathbf{w} ] = \max_{\hat{\mathbf{e}} \in \hat{\E}} [\hat{\mathbf{e}} \cdot \mathbf{w} ] =  \max_{\mathbf{x} \in \hat{\P}} [\mathbf{x}\cdot \mathbf{w} ]$. With the access to an efficient membership oracle for $\hat{\P}$, we can solve the optimization problem $\max_{\mathbf{x} \in \hat{\P}} [\mathbf{x}\cdot \mathbf{w} ]$ and compute a vertex optimal solution, i.e., some $\hat{\mathbf{e}}^* \in \hat{\E}$,  in $poly(n)$ time by Theorem \ref{thm:mem} (note that $\hat{\P}$ has a trivial feasible point $\mathbf{0}$). If $\hat{\mathbf{e}}^*$ is also in $\E$, then we know that $\hat{\mathbf{e}}^*$ is also optimal for the DBR problem with weight $\mathbf{w}$.   However, $\hat{\mathbf{e}}^*$ may not be in $\E$ due to possible zero-valued entries in $\mathbf{w}$. We now show how to properly regularize the linear program  $\max_{\mathbf{x} \in \hat{\P}} [\mathbf{x}\cdot \mathbf{w} ]$ to guarantee that the returned optimal vertex solution is always a feasible pure strategy, and meanwhile, is still optimal for the objective $\mathbf{x}\cdot \mathbf{w}$. 
	
	Let polynomial $q(n)$ be the bit complexity of $\mathbf{w}$, and let $\eps = 2^{-q(n)}/2n$ which is still of polynomial bit length. Now consider the regularized optimization problem $\max_{\mathbf{x} \in \hat{\P}} [\mathbf{x}\cdot \mathbf{w} + \eps ( \mathbf{x} \cdot \mathbf{1} )]$. First observe that any vertex optimal solution to the regularized optimization problem will be a real pure strategy. Otherwise, substituting a sub pure strategy by the corresponding real pure strategy will not decrease the first term $\mathbf{x}\cdot \mathbf{w}$ but will strictly increase the second term $\eps (\mathbf{1} \cdot \mathbf{x})$, thus strictly increases the whole objective value. We now claim that if a pure strategy $\mathbf{e}^*$ is optimal for the regularized linear objective $\mathbf{x}\cdot \mathbf{w} + \eps ( \mathbf{x} \cdot \mathbf{1} )$ over $\hat{\P}$, it is also optimal for the DBR problem over $\E$ with weight $\mathbf{w}$. Suppose toward contradiction that $\mathbf{e}^* \in \E$ is optimal for the regularized optimization problem but is not optimal for the DBR problem. Then there exists a feasible pure strategy $\mathbf{e}'$ such that  $\mathbf{e}' \cdot \mathbf{w} > \mathbf{e}^* \cdot \mathbf{w}$. In fact, since $\mathbf{e}^*, \mathbf{e}'$ are 0-1 vectors and $\mathbf{w}$ has bit complexity $q(n)$, we have $\mathbf{e}' \cdot \mathbf{w} - \mathbf{e}^* \cdot \mathbf{w} \geq 2^{-q(n)}$. Therefore,
	\begin{align*}
	\mathbf{e}^* \cdot \mathbf{w} + \eps (\mathbf{1} \cdot \mathbf{e}^* ) &\leq \mathbf{e}' \cdot \mathbf{w}- 2^{-q(n)} + \eps n \\
	&= \mathbf{e}' \cdot \mathbf{w}- 2^{-q(n)}/2 \\
	&<  \mathbf{e}' \cdot \mathbf{w} + \eps (\mathbf{1} \cdot \mathbf{e}' ), 
	\end{align*} 
	which contradicts the fact that $\mathbf{e}^*$ is optimal for the regularized optimization problem. Therefore, $\mathbf{e}^*$ must be optimal for the DBR problem. Moreover, the regularized optimization problem can be solved in polynomial time with access to a $poly(n)$ time membership oracle for $\hat{\P}$. This proves the lemma. 
\end{proof}

We are now ready to complete the reduction by reducing membership checking for polytope $\hat{\mathcal{P}}$ to solving zero-sum security games over $\E$.  Before that, we first show the  following down-monotone property of the relaxed polytope $\hat{\P}$, which turns out to be crucial for the reduction. 
\begin{lemma}\label{lem:down}
	Polytope $\hat{\mathcal{P}}$ is down-monotone in $\RR_+^n$. That is, for any $\mathbf{x} \in \hat{\mathcal{P}}$ and $\mathbf{0} \leq \mathbf{x}' \leq \mathbf{x}$ (entry-wise), we have $\mathbf{x}' \in \hat{\P}$. 
\end{lemma}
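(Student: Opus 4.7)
The plan is to exploit the key closure property of $\hat{\E}$: it is closed under the operation of zeroing out any entry. Indeed, if $\hat{\mathbf{e}} \in \hat{\E}$ is a sub pure strategy of some $\mathbf{e} \in \E$, then replacing $\hat{e}_j$ by $0$ for any index $j$ yields another vector still dominated by $\mathbf{e}$, hence another element of $\hat{\E}$. The polytope $\hat{\P}$ is the convex hull of $\hat{\E}$, so I can carry this closure property from pure strategies to their convex combinations.

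First I would reduce to a single coordinate. It suffices to show that whenever $\mathbf{x} \in \hat{\P}$ and $0 \le x_j' \le x_j$, the vector $\mathbf{x}'$ obtained from $\mathbf{x}$ by replacing $x_j$ with $x_j'$ (and leaving all other entries unchanged) lies in $\hat{\P}$; applying this one coordinate at a time reaches any entry-wise dominated point from $\mathbf{x}$.

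Next, starting from a convex decomposition $\mathbf{x} = \sum_{\hat{\mathbf{e}} \in \hat{\E}} p_{\hat{e}}\,\hat{\mathbf{e}}$, I would partition the support into those $\hat{\mathbf{e}}$ with $\hat{e}_j = 1$ and those with $\hat{e}_j = 0$. The total probability mass on the first group equals exactly $x_j$ by the marginal identity \eqref{eq:marginalDef} (extended to $\hat{\E}$). Let $\delta = x_j - x_j' \in [0,x_j]$. I would select a sub-collection of the ``$\hat{e}_j = 1$'' strategies whose total mass is exactly $\delta$, splitting a single $p_{\hat{e}}$ into two copies if necessary to hit the target mass exactly. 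For each selected $\hat{\mathbf{e}}$, I replace it by the strategy $\hat{\mathbf{e}} - \mathbf{u}_j$, where $\mathbf{u}_j$ is the $j$-th standard unit vector; by the closure observation above, this new vector still belongs to $\hat{\E}$. The resulting convex combination is a valid representation of $\mathbf{x}'$ over $\hat{\E}$, so $\mathbf{x}' \in \hat{\P}$.

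The argument is almost entirely bookkeeping, so there is no serious obstacle; the only subtle point to get right is that $\hat{\E}$ is closed under coordinate-wise truncation (without which the substitution step would fail), and that the ``mass splitting'' to achieve precisely $\delta$ mass is permitted because $\hat{\P}$ is defined as the convex hull of $\hat{\E}$ and we may repeat a pure strategy in the decomposition with arbitrary nonnegative weights summing to one. Iterating over the finitely many coordinates on which $\mathbf{x}$ and $\mathbf{x}'$ differ completes the proof.
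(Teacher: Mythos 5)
Your proof is correct and follows essentially the same route as the paper: both fix one coordinate at a time and modify the convex decomposition of $\mathbf{x}$ by shifting probability mass from sub pure strategies with $\hat{e}_j = 1$ to their truncations with the $j$-th entry zeroed out, relying on the closure of $\hat{\E}$ under this operation. The only cosmetic difference is that the paper rescales every such coefficient proportionally by $x_j'/x_j$, whereas you carve out a sub-collection of total mass exactly $\delta = x_j - x_j'$; this is equivalent bookkeeping (and, if anything, sidesteps the division by $x_j$ when $x_j = 0$).
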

\begin{proof}
	Let $\mathbf{x} \in \hat{\mathcal{P}}$ and $\mathbf{x}'$ satisfy $x'_i \leq x_i$ for all $i\in[n]$. We show that $\mathbf{x}' \in  \hat{\mathcal{P}}$. In particular, starting from a convex decomposition for $\mathbf{x}$, we will construct a convex decomposition for  $\mathbf{x}'$ over elements in $\hat{E}$.
	
	Since $\mathbf{x} \in \hat{\mathcal{P}}$, there exists $\mathbf{p} \in \Delta_{|\hat{\E}|}$, such that 
	\begin{equation}\label{eq:hatEdecompose}
	\mathbf{x} = \sum_{\hat{\mathbf{e}}\in \hat{\E}} p_{\hat{e}} \cdot \hat{\mathbf{e}}.
	\end{equation} For all the $\hat{\mathbf{e}}$ satisfying $\hat{e}_1 = 1$, let $\hat{\mathbf{e}}' \in \hat{E}$ be the sub pure strategy, which is the same as $\hat{\mathbf{e}}$ \emph{except} that $\hat{e}'_1 = 0$. Then we take the following operations: change the coefficient of $\hat{\mathbf{e}}$ in Equation \eqref{eq:hatEdecompose} to $p_{\hat{e}} \cdot \frac{x_1'}{x_1}$ and change the coefficient of $\hat{\mathbf{e}}'$ in Equation \eqref{eq:hatEdecompose} to $p_{\hat{e}'} + p_{\hat{e}} \cdot (1-\frac{x_1'}{x_1})$. It is easy to check that these new coefficients still sum up to $1$ and are non-negative. Therefore, Equation \eqref{eq:hatEdecompose} with the new coefficients is still a convex combination over elements in $\hat{E}$, which gives us a new vector $\mathbf{x}'' \in \hat{E}$. Moreover, $\mathbf{x}''$ satisfies $x''_1= x'_1$ and  $x''_i = x_i$ for all $i \geq 2$. Continuing these operations at all the other entries of  $\mathbf{x}''$, we will reach a convex decomposition of $\mathbf{x}'$. Therefore, $\mathbf{x}' \in \hat{E}$.		
	
\end{proof}

\begin{lemma}\label{lem:MembershipToLP}
	Membership checking for polytope $\hat{\mathcal{P}}$ reduces in $\poly(n)$ time to computing the \emph{game value} of zero-sum security games over $\E$. 
\end{lemma}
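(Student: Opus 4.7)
The plan is to encode membership in $\hat{\P}$ as a threshold question on the value of a carefully chosen zero-sum security game over $\E$. Given an input point $\mathbf{x}_0 \in \RR^n$, I would first dispose of the trivial cases: if $\mathbf{x}_0 \notin [0,1]^n$, then $\mathbf{x}_0 \notin \hat{\P}$ since $\hat{\P} \subseteq [0,1]^n$. Assuming henceforth $\mathbf{x}_0 \in [0,1]^n$, Lemma \ref{lem:down} tells us that $\mathbf{x}_0 \in \hat{\P}$ if and only if there exists $\mathbf{x} \in \P$ that dominates $\mathbf{x}_0$ coordinate-wise. Letting $S = \{ i : (x_0)_i > 0 \}$, the constraint $x_i \geq (x_0)_i$ is automatic for $i \notin S$ (since $\P \subseteq [0,1]^n$), so the feasibility question reduces to whether
\begin{equation*}
\max_{\mathbf{x} \in \P} \min_{i \in S} \frac{x_i}{(x_0)_i} \geq 1.
\end{equation*}
My goal is to realize this max-min as the value of a zero-sum security game over $\E$.

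Concretely, I would define the game $\G(\mathbf{r},\mathbf{c},\E)$ by setting $r_i = 1/(x_0)_i$ and $c_i = 0$ for $i \in S$, and $r_i = M+1$, $c_i = M$ for $i \notin S$, where $M$ is any polynomially-bounded rational strictly larger than $\max_{i \in S} 1/(x_0)_i$. This satisfies $r_i > c_i$ for every $i$, so the game is well-defined. Against attacker pure strategy $i$, the defender's utility is $x_i/(x_0)_i$ for $i \in S$ and $M + x_i \geq M$ for $i \notin S$; by choice of $M$, the inner minimum is always attained in $S$, so the game value equals $\max_{\mathbf{x} \in \P} \min_{i \in S} x_i/(x_0)_i$. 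Hence $V \geq 1$ if and only if $\mathbf{x}_0 \in \hat{\P}$. The membership oracle then just queries the game-value oracle and compares to the known threshold $1$.

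The main thing to verify is that (i) the rewards $r_i = 1/(x_0)_i$, the cutoff $M$, and hence the whole game description have bit complexity polynomial in that of $\mathbf{x}_0$, so the call to the game-value oracle costs $\poly(n)$ time; and (ii) the test $V \geq 1$ is well-posed, which it is because $V$ is rational with polynomially-bounded bit complexity in this construction. The one conceptual hurdle is the presence of coordinates with $(x_0)_i = 0$, where the literal recipe $r_i = 1/(x_0)_i$ is undefined; the $M$-shift resolves this by making the attacker strictly prefer coordinates in $S$, so those targets drop out of the relevant minimum while still satisfying the zero-sum game's definitional requirement $r_i > c_i$. Once this is handled, no further structural argument is needed, as the heavy lifting has already been done by the down-monotonicity of $\hat{\P}$ in Lemma \ref{lem:down}.
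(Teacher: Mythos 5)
Your construction is essentially the paper's: you likewise set $r_i = 1/(x_0)_i$, $c_i = 0$ on the support of $\mathbf{x}_0$ and pad the zero coordinates so they never attain the attacker's minimum (the paper uses $r_i = 2$, $c_i = 1$ there, which makes every target's value exactly $1$ at coverage $\mathbf{x}_0$), then test whether the game value is at least $1$, invoking down-monotonicity for the forward direction. One small attribution slip: the equivalence ``$\mathbf{x}_0 \in \hat{\P}$ iff some $\mathbf{x} \in \P$ dominates $\mathbf{x}_0$'' is not literally Lemma \ref{lem:down} --- its ``only if'' half needs the (easy) observation that replacing each sub pure strategy in a convex decomposition of $\mathbf{x}_0$ by a dominating real pure strategy yields a dominating point of $\P$, which is exactly the paper's argument for the $\Leftarrow$ direction and should be stated rather than assumed.
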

\begin{proof}
	Given any $\mathbf{x} \in \RR^n$, we show how to check whether $\mathbf{x} \in \hat{\P}$ by solving properly constructed zero-sum security games. We assume  $\mathbf{x}$ lies in $\mathbb{R}_+^n$ since otherwise $\mathbf{x} \not \in \hat{\mathcal{P}}$. Given any $ \mathbf{x} \in \mathbb{R}_+^n$, we construct the following security game instance:
	\begin{itemize}
		\item For any $i \in [n]$ such that $x_i = 0$, let $c_i = 1$ and $r_i = 2$. Therefore, $x_i r_i + (1-x_i)c_i = 1$.
		
		\item For any $i\in[n]$ such that $x_i > 0$, let $c_i = 0$ and $r_i = \frac{1}{x_i}$. Therefore, $x_i r_i + (1-x_i)c_i = 1$. 
	\end{itemize}
	
	Note that in both cases, the condition $r_i > c_i$ is satisfied. We claim that the defender's optimal utility, i.e., the game value, in the above zero-sum security  game is at least $1$ \emph{if and only if} $\mathbf{x} \in \hat{\mathcal{P}}$.
	
	$\Rightarrow$ direction: Let $(\mathbf{p}^*,\mathbf{x}^*,u^*)$ be the optimal solution to the LP in Figure \ref{lp:ZeroSumSSE}. If the defender's optimal utility is at least $1$, i.e., $u^* \geq 1$. We have \begin{equation*}
	x_i^*r_i + (1-x_i^*)c_i \geq u^* \geq 1 = x_ir_i + (1-x_i)c_i, \, \forall i \in [n].
	\end{equation*} 
	Since $r_i > c_i$, this induces $x_i^* \geq x_i$ for any $i$. Notice that the optimal (feasible) marginal probability $\mathbf{x}^*$ is in $ \mathcal{P}$ by definition and $\mathcal{P} \subseteq \hat{\mathcal{P}}$, therefore  $\mathbf{x}^* \in \hat{\mathcal{P}}$. However, $x_i \leq x_i^*$ for any $i$ and $\hat{\mathcal{P}}$ is down monotone, so $\mathbf{x} \in \hat{\mathcal{P}}$, as desired. 
	
	$\Leftarrow$ direction: Let $\mathbf{x} \in \hat{\mathcal{P}}$. So there exists $\mathbf{p} \in \Delta_{|\hat{\E}|}$ such that 
	\begin{equation}\label{eq:subpure}
	\mathbf{x} = \sum_{\hat{\mathbf{e}} \in \hat{\E}} p_{\hat{e}} \cdot  \hat{\mathbf{e}}.
	\end{equation} Notice that each $\hat{\mathbf{e}}$ is a sub pure strategy of some real pure strategy $\mathbf{e}(\hat{\mathbf{e}})\in \E$. By substituting the $\hat{\mathbf{e}}$ in Equation \eqref{eq:subpure} by   $\mathbf{e}(\hat{\mathbf{e}})$, we have 
	\begin{equation*}
	\mathbf{x} = \sum_{\hat{\mathbf{e}} \in \hat{\E}} p_{\hat{e}} \cdot  \hat{\mathbf{e}} \leq \sum_{\hat{\mathbf{e}} \in \hat{\E}} p_{\hat{e}}  \cdot \mathbf{e}(\hat{\mathbf{e}}).
	\end{equation*} 
	Since $\mathbf{x}' = \sum_{\hat{\mathbf{e}} \in \hat{\E}} p_{\hat{e}}  \cdot \mathbf{e}(\hat{\mathbf{e}})$ is a convex combination of pure strategies in $\E$, we have $\mathbf{x}' \in \mathcal{P}$. Moreover, by playing the mixed strategy with marginal probability $\mathbf{x}'$, the defender's expected utility is at least $1$ since $x_i' \geq x_i$ for any $i \in [n]$. As a result, the optimal defender utility in this zero-sum game is at least $1$, completing the proof.     
\end{proof}   
Lemma \ref{lem:reduce1} and \ref{lem:MembershipToLP} forms a reduction from the DBR problem to the mimimax equilibrium. This together with the reduction from the minimax equilibrum to the DBR problem conclude the proof of Theorem \ref{thm:main}.

\subsection{When Gaming is Easier than Best Response}
Security games are special bilinear games. Theorem \ref{thm:main} shows that the particular problem of computing the minimax equilibrium for zero-sum security games is as hard as the general DBR problem. A natural question is whether there are instances such that the hardness of gaming and best response are strictly separated. Notice, however, that the best response problem is essentially more general, thus no easier, than solving the game. Therefore, the question really is, whether there are instances where gaming is easier than best response. The following proposition answers this in the affirmative for security games. Notice that this proposition does not contradict Theorem \ref{thm:main}, since the constructed security game instances here have further restricted payoffs, which make solving the game easy but still maintain the hardness of the best response.   

\begin{proposition}
	There exist zero-sum security games such that the minimax equilibrium can be computed in polynomial time but the DBR problem is NP-hard. 
\end{proposition}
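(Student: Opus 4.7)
The plan is to construct a single zero-sum security game in which $\E$ is rich enough to encode an NP-hard combinatorial problem (so DBR over $\E$ is NP-hard), while the chosen payoffs include a ``dummy'' target that uniquely dictates the attacker's best response, trivializing the minimax computation. Take any set system $\E'\subseteq\{0,1\}^m$ for which DBR is NP-hard — for instance, let $\E'$ be the family of characteristic vectors of independent sets in a graph $G$, since maximum weighted independent set with non-negative weights is NP-hard. Augment $\E'$ with one extra coordinate that is always uncovered:
\begin{equation*}
\E \;=\; \bigl\{(\mathbf{e}',0):\mathbf{e}'\in\E'\bigr\}\subseteq\{0,1\}^{m+1}.
\end{equation*}
DBR over $\E$ remains NP-hard: any input $\mathbf{w}'\in\mathbb{R}_+^m$ for DBR over $\E'$ pulls back to $\mathbf{w}=(\mathbf{w}',0)\in\mathbb{R}_+^{m+1}$ with identical optimal value.

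Next, fix a large constant $M$ and choose payoffs as follows: set $c_{m+1}=-M$ and $r_{m+1}=-M+1$, and for each $i\le m$ pick any $r_i>c_i>-M+1$. The inequalities $r_i>c_i$ hold for every $i$, and in the zero-sum case this forces $\rho_i>\zeta_i$, so the instance is a legitimate zero-sum security game per Definition \ref{def:SG}. Because $x_{m+1}=0$ for every $\mathbf{x}\in\P$, the defender's utility when the attacker hits target $m+1$ equals $c_{m+1}=-M$, which is strictly smaller than $c_i>-M$, and hence strictly smaller than $x_ir_i+(1-x_i)c_i$ for every other target $i$, regardless of $\mathbf{x}$. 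The attacker's unique minimax strategy is therefore to attack target $m+1$ with probability $1$; the defender may play any $\mathbf{e}\in\E$ (e.g., the all-zero vector, which lies in $\E$ when $\E'$ is the independent-set family); the game value equals $-M$. This entire equilibrium is produced in $O(n)$ time.

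I do not foresee a substantive obstacle; the whole content is the construction, with only routine checks that (i) the payoff inequalities of Definition \ref{def:SG} hold and (ii) the map $\mathbf{w}'\mapsto(\mathbf{w}',0)$ reduces NP-hard DBR over $\E'$ to DBR over $\E$. No conflict with Theorem \ref{thm:main} arises, because that theorem quantifies over \emph{all} payoff structures on a fixed $\E$, whereas the proposition merely asserts existence of \emph{some} payoff structure rendering minimax easy while DBR stays hard.
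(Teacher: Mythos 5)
Your construction is correct, but it proves the proposition by a genuinely different route than the paper. The paper builds a single concrete game on the complete graph $K_n$: targets are edges, each of $k$ resources patrols a vertex and covers its incident edges, and all rewards/costs are identical across targets; DBR is then NP-hard via (partial) vertex cover, while by symmetry the uniform distribution over $k$-subsets of vertices is a maximin strategy, so the equilibrium is easy even though every target is coverable and the attacker faces a real choice. You instead take an arbitrary NP-hard set system $\E'$, append a permanently uncovered dummy coordinate, and rig the payoffs so that the dummy target is strictly worse for the defender than any achievable outcome elsewhere; the attacker's minimax strategy is then forced onto the dummy target and the game value is the constant $-M$ independent of $\E'$. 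Your checks are all sound: $r_i>c_i$ holds everywhere (which in the zero-sum case is equivalent to $\rho_i>\zeta_i$), $x_ir_i+(1-x_i)c_i\ge c_i>-M$ for $i\le m$ while the dummy target yields exactly $-M$, any $\mathbf{x}\in\P$ is a defender best response to the forced attack, and the padding map $\mathbf{w}'\mapsto(\mathbf{w}',0)$ preserves NP-hardness of DBR. What your approach buys is modularity and generality (it separates gaming from best response for essentially any hard $\E'$, with a one-line equilibrium verification); what it gives up is that the resulting game is strategically degenerate --- the easiness comes entirely from an uncoverable target that makes the value independent of the combinatorial structure --- whereas the paper's example exhibits the separation in a non-degenerate game where the optimal defender strategy genuinely uses the structure of $\E$. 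Nothing in Definition~\ref{def:SG} forbids a target with $\max_{\mathbf{e}\in\E}e_i=0$, so your instance is a legitimate security game, and your closing remark correctly explains why there is no tension with Theorem~\ref{thm:main}.
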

\begin{proof}
	Consider a security game played on a complete graph $K_n$. Each \emph{edge} is a target. The defender has $k (<n)$ security resources and each resource can patrol a \emph{vertex}, by which the $n-1$ adjacent edges of this vertex are covered. An edge is covered if at least one of its end vertexes is patrolled. Given that the attacker attacks any edge $e$, the defender gets utility $1$ if it is protected and utility $0$ otherwise. The attacker seeks to minimize the defender's utility. The DBR problem for this security game is, given weight $w_e \geq 0$ for any edge $e \in K_n$, finding $k$ vertexes that maximize the total edge weight they cover. This is NP-hard by a trivial reduction from vertex cover. However, by symmetry, it is not hard to check that uniformly randomly sampling $k$ vertexes from $K_n$ to patrol is a minimax equilibrium.
\end{proof}

\section{General-sum Security Games}\label{sec:general}
In this section, we consider general-sum security games. Recall that such a game $\G$ is given by a tuple $(\mathbf{r},\mathbf{c},\mathbf{\rho},\mathbf{\zeta},\E)$ where $r_i$ [$  c_i$] is the defender's reward [cost] and $\rho_i$ [$\zeta_i$] is the attacker's reward [cost], if target $i$ is attacked. We consider the computation of the two mostly adopted equilibrium concepts in the security game literature, namely, the strong Stackelberg equilibrium (SSE) and Nash equilibrium (NE). For each equilibrium concept, we prove analogous equivalence theorem as the zero-sum case. 
\begin{theorem}\label{thm:SSE}
	There is a $\poly(n)$ time algorithm to compute
	the strong Stackelberg equilibrium for security games  over $\E$, \emph{if and only if} there is a $\poly(n)$ time algorithm to solve the $DBR$ problem over $\E$.
\end{theorem}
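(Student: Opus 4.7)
The plan is to mirror the proof of Theorem \ref{thm:main}, handling the two directions separately and exploiting the fact that in a zero-sum security game the SSE defender utility coincides with the minimax value.

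For the direction DBR $\Rightarrow$ SSE, I would use the standard ``attack-set enumeration'' LP framework. For each potential attacked target $t\in[n]$, solve the linear program
\begin{equation*}
\max_{\mathbf{x} \in \P} \, \bigl[ r_t x_t + c_t(1-x_t) \bigr] \quad \text{subject to} \quad \rho_t(1-x_t) + \zeta_t x_t \,\geq\, \rho_i(1-x_i) + \zeta_i x_i, \ \forall i \in [n],
\end{equation*}
which maximizes the defender's utility over all $\mathbf{x}\in\P$ for which attacking $t$ is a best response for the attacker. By Fact \ref{fact}, the DBR oracle yields a $\poly(n)$-time optimization oracle over $\P$, and therefore by Theorem \ref{thm:sep} (the equivalence of separation and optimization) a $\poly(n)$-time separation oracle for $\P$; combining this with the $n-1$ explicit linear attacker constraints gives a $\poly(n)$-time separation oracle for the feasible region of each such LP, so each LP is solvable in $\poly(n)$ time, again via Theorem \ref{thm:sep}. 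Returning the best solution over all $n$ choices of $t$ gives the SSE defender utility and a corresponding marginal $\mathbf{x}^*\in\P$; by Theorem \ref{thm:decompose} we can then decompose $\mathbf{x}^*$ into at most $n+1$ pure strategies in $\poly(n)$ time, which produces an implementable SSE. Standard tie-breaking arguments (the attacker breaks ties in favor of the defender in the SSE) are handled automatically because we take a max over all target choices.

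For the direction SSE $\Rightarrow$ DBR, the key observation is that a zero-sum security game is in particular a security game, so an SSE oracle for general-sum games over $\E$ also computes SSEs of zero-sum games over $\E$. In any zero-sum game the SSE is payoff-equivalent to the minimax equilibrium (as noted in Section \ref{sec:prem:game}), so calling the SSE oracle on a zero-sum instance immediately recovers the game value. I can therefore reuse, verbatim, the zero-sum construction of Lemma \ref{lem:MembershipToLP}: given a candidate $\mathbf{x}\in\RR^n_+$, construct the zero-sum game with $c_i=1,r_i=2$ whenever $x_i=0$ and $c_i=0,r_i=1/x_i$ whenever $x_i>0$, run the SSE oracle, and check whether the resulting defender utility is at least $1$. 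By that lemma this is a $\poly(n)$-time membership oracle for the relaxed polytope $\hat{\P}$, and Lemma \ref{lem:reduce1} then reduces the DBR problem to membership checking for $\hat{\P}$ in $\poly(n)$ time.

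The main conceptual obstacle is the first direction: the DBR oracle only supports \emph{non-negative} weights, whereas a separation oracle for $\P$ a priori needs to solve linear optimization for arbitrary weight vectors. I expect this to be handled by a simple shift, adding a sufficiently large positive constant to every coordinate of the weight vector and then subtracting the same constant times the all-ones vector from the objective; any $\mathbf{x}\in\P$ satisfies $\mathbf{1}\cdot\mathbf{x} \le n$ (in fact equals the number of resources in the common settings), so this translation preserves optima and keeps the instance inside the scope of the DBR oracle. Aside from this book-keeping, both reductions follow directly from the machinery already developed in Section \ref{sec:zero}, which is why the SSE characterization parallels Theorem \ref{thm:main} so cleanly.
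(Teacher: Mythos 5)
Your second direction (SSE $\Rightarrow$ DBR) is correct and is exactly the paper's argument: restrict to zero-sum instances, use the payoff-equivalence of the SSE and the minimax equilibrium, and reuse Lemmas \ref{lem:MembershipToLP} and \ref{lem:reduce1}. Your first direction also has the same high-level structure as the paper's (the multiple-LPs method, one LP per candidate attacked target), but the way you propose to solve each $LP_t$ has a genuine gap. You want a separation oracle for $\P$ itself, obtained from the DBR oracle by the reverse (optimization $\Rightarrow$ separation) direction of Gr{\"o}tschel--Lov{\'a}sz--Schrijver. That requires being able to optimize an \emph{arbitrary} linear objective over $\P$, whereas the DBR oracle only handles non-negative weights, and your proposed fix does not work: replacing $\mathbf{w}$ by $\mathbf{w}+M\mathbf{1}$ changes the objective by $M(\mathbf{1}\cdot\mathbf{x})$, and $\mathbf{1}\cdot\mathbf{x}$ is \emph{not} constant over $\P$ in general (pure strategies in $\E$ may cover different numbers of targets, e.g.\ in the uniform matroid, coverage, or packing examples of Section \ref{sec:SG:comb}), so the shift can move the optimum. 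This is precisely the obstruction that forced the paper to introduce the relaxed, down-monotone polytope $\hat{\P}$ in Section \ref{sec:zero}; it cannot be waved away with a translation. (A smaller issue: Theorem \ref{thm:sep} as stated only gives separation $\Rightarrow$ optimization, so even for arbitrary-weight objectives you would need to invoke the full GLS equivalence rather than the cited theorem.)

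The paper sidesteps all of this by working with the \emph{dual} of $LP_k$: the exponentially many dual constraints are indexed by $\mathbf{e}\in\E$ and have the form $r\geq \nu\cdot\mathbf{e}$, and a single DBR call serves as a separation oracle for that family, after which Theorem \ref{thm:sep} (in the direction actually stated) solves both the dual and the primal. If you want to keep your primal-side formulation, you would need to either justify replacing $\P$ by a polytope over which non-negative-weight optimization suffices (and argue this does not change the value of $LP_t$, which is delicate because raising $x_t$ tightens the attacker's incentive constraint for target $t$), or switch to the paper's dual/cutting-plane route. As written, the claim that the DBR oracle yields a separation oracle for $\P$ is unsupported, and the first direction does not go through.
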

\begin{proof}
	The ``only if'' direction follows from Theorem \ref{thm:main}, the fact that the minimax equilibrium is payoff-equivalent to the strong Stackelberg equilibrium (SSE) in zero-sum games, and that zero-sum games are special cases of general-sum games. We only prove the ``if" direction. 
	Recalling the definition of SSE in Section \ref{sec:prem:game}, without loss of generality, we can assume the attacker always plays a pure strategy since he moves after the defender. As observed in \cite{conitzer2006}, to compute the defender's optimal mixed strategy, we can enumerate all the possibilities of the attacker's best response choices. In particular, constrained on that the attacker's best response is target $k$, the defender's optimal strategy can be computed by the following linear program (denoted as $LP_k$):
	\begin{lp*}\label{lp:SSE}
		\maxi{x_k r_k + (1-x_k)c_k }
		\st
		\qcon{(1-x_k) \rho_k + x_k \zeta_k \geq (1-x_i) \rho_i + x_i \zeta_i  }{i \not = k}
		\con{ \sum_{\mathbf{e} \in \E} p_e \cdot \mathbf{e}  = \mathbf{x}}
		\con{\sum_{\mathbf{e} \in \E} p_e = 1}
		\qcon{p_e \geq 0}{\mathbf{e} \in \E}
	\end{lp*}where the first set of constrains is to guarantee that the attacker is indeed incentivized to attack target $k$. It is not hard to check that the dual program of $LP_k$ admits a $\poly(n)$ time separation oracle if the DBR problem can be solved in $\poly(n)$ time. The SSE can be computed by solving $LP_1,...,LP_n$ and then picking the optimal defender mixed strategy (and corresponding attacker best response) of the LP with the largest objective value.  
\end{proof}

We now turn to the computation of Nash equilibria. As widely known in the literature of algorithmic game theory,  computing one Nash equilibrium for two-player normal-form games is PPAD-hard \cite{daskalakis2009,chen2009}, and is only harder for general bilinear games \cite{garg2011bilinear}. Interestingly, it turns out that computing one Nash equilibrium in security games is  relatively easy. This is due to the following characterization of Nash equilibria in security games by \cite{korzhyk2011}. 
\begin{lemma}\label{lem:NEset} \cite{korzhyk2011}
	Consider a security game $\G (\mathbf{r},\mathbf{c},\mathbf{\rho},\mathbf{\zeta},\E)$. Let $\bar{\G}(-\mathbf{\zeta}, -\mathbf{\rho},\mathbf{\rho},\mathbf{\zeta},\E)$ be the corresponding zero-sum security game by re-setting the defender's utilities. Then $(\mathbf{x},\mathbf{y})$ is a Nash equilibrium of $\G$ if and only if $(\mathbf{x},f(\mathbf{y}))$ is a minimax equilibrium of the zero-sum game $\bar{\G}$, where the one-to-one transform function $f:\RR^n \to \RR^n$ is defined as follows:
	\begin{equation}\label{fn:trans}
	f_i(\mathbf{y}) = \frac{1}{\lambda} \frac{r_i - c_i}{\rho_i - \zeta_i} y_i, \, \, \forall i \in [n], \, \, \text{ where } \lambda = \sum_{i=1}^n  \frac{r_i - c_i}{\rho_i - \zeta_i} y_i\text{ is the normalization factor}.
	\end{equation}
	Moreover, Nash equilibria of $\G$ are interchangeable. That is, if $(\mathbf{x},\mathbf{y})$ and $(\mathbf{x}',\mathbf{y}')$ are both Nash equilibria, so are $(\mathbf{x},\mathbf{y}')$ and $(\mathbf{x}',\mathbf{y})$. The attacker derives the same utility in any Nash equilibrium of $\G$. 
\end{lemma}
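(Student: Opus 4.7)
The plan is to exploit two structural coincidences between $\G$ and $\bar\G$: the \emph{attacker}'s pure-strategy payoffs $(\rho_i,\zeta_i)$ are identical in both games, and the transform $f$ is tuned so that the \emph{defender}'s expected payoff in $\G$, viewed as a linear functional of $\mathbf{x}$, is a positive scalar multiple of her expected payoff in $\bar\G$ once $\mathbf{y}$ is replaced by $f(\mathbf{y})$. Before using $f$ I would record that the security-game assumption $r_i>c_i$ and $\rho_i>\zeta_i$ makes every ratio in \eqref{fn:trans} strictly positive, so $\lambda>0$ for all $\mathbf{y}\in\Delta_n$, the coordinates satisfy $f_i(\mathbf{y})>0\iff y_i>0$, and $f$ is a support-preserving continuous bijection $\Delta_n\to\Delta_n$ with inverse of the same form (swap $r_i-c_i$ with $\rho_i-\zeta_i$).

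To prove the ``iff'' I would check each player's best-response condition in turn. For the defender, rewriting
\[
U^d_{\G}(\mathbf{x},\mathbf{y})=\sum_{i} y_i(r_i-c_i)\,x_i+\sum_{i} y_i c_i,\qquad U^d_{\bar\G}(\mathbf{x},\mathbf{y}')=\sum_{i} y'_i(\rho_i-\zeta_i)\,x_i-\sum_{i} y'_i\rho_i
\]
and substituting $\mathbf{y}'=f(\mathbf{y})$ makes the coefficients of $x_i$ in the two objectives agree up to the common positive factor $1/\lambda$. Hence the two defender best-response linear programs over $\P$ have identical argmax sets: $\mathbf{x}$ best-responds to $\mathbf{y}$ in $\G$ iff $\mathbf{x}$ best-responds to $f(\mathbf{y})$ in $\bar\G$. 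For the attacker, the pure payoffs $(1-x_i)\rho_i+x_i\zeta_i$ coincide in both games, so the pure best-response set $\mathrm{BR}^a(\mathbf{x})\subseteq[n]$ is the same in $\G$ and $\bar\G$; a mixed strategy is a best response iff its support is contained in $\mathrm{BR}^a(\mathbf{x})$, and since $f$ preserves supports, $\mathbf{y}$ is attacker-best-responding in $\G$ iff $f(\mathbf{y})$ is attacker-best-responding in $\bar\G$. Combining the two, $(\mathbf{x},\mathbf{y})$ is an NE of $\G$ iff $(\mathbf{x},f(\mathbf{y}))$ is a Nash -- equivalently, since $\bar\G$ is zero-sum, a minimax -- equilibrium of $\bar\G$.

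The remaining claims then follow from classical zero-sum properties. Minimax equilibria of $\bar\G$ are interchangeable and share a common value $V$. Hence if $(\mathbf{x}_1,\mathbf{y}_1),(\mathbf{x}_2,\mathbf{y}_2)$ are NEs of $\G$ then $(\mathbf{x}_1,f(\mathbf{y}_1)),(\mathbf{x}_2,f(\mathbf{y}_2))$ are minimax in $\bar\G$; by interchangeability so are the swaps $(\mathbf{x}_1,f(\mathbf{y}_2)),(\mathbf{x}_2,f(\mathbf{y}_1))$, and pulling back through $f^{-1}$ gives interchangeability of NE in $\G$. For the common attacker payoff, note that in any NE the attacker best-responds, so his expected payoff is simply $\max_i[(1-x_i)\rho_i+x_i\zeta_i]$; but this is exactly the attacker's payoff in $\bar\G$ at the corresponding minimax equilibrium, i.e.\ $-V$, which is the same for every NE.

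The main obstacle is the attacker direction: one has to lift the identity of pure-best-response sets to an identity of \emph{mixed}-best-response correspondences under $f$, which is precisely where support preservation by $f$ and strict positivity of $\lambda$ do the work. Care is also needed with the defender's LP having multiple optima, but the fact that the two linear objectives agree up to a \emph{positive} scalar makes the argmax-equality unconditional.
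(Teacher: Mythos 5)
Your proposal is correct and follows essentially the same route as the paper, which only sketches this argument (deferring the formal proof to the cited work of Korzhyk et al.): the attacker's payoffs coincide in $\G$ and $\bar\G$ so support preservation by $f$ handles his best-response condition, while $f$ rescales the attacker's distribution exactly so that the defender's linear objective over $\P$ changes only by the positive factor $1/\lambda$ plus a constant, preserving her argmax; interchangeability and the common attacker value then pull back from the zero-sum game. Your write-up is in fact more explicit than the paper's sketch about why $\lambda>0$, why $f$ is a support-preserving bijection of $\Delta_n$, and why multiple defender optima cause no trouble.
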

\noindent Notice that the transform function defined in Equation \eqref{fn:trans} is \emph{non-linear} due to the normalization factor. We sketch the intuition of Lemma \ref{lem:NEset} here, while refer the reader to \cite{korzhyk2011} for a formal proof. Note that the mapping $f$ only re-weights those \emph{non-zero} $y_i$'s whose indexes correspond to attacker best responses, so $f(\mathbf{y})$ is also a best response to $\mathbf{x}$ in $\G$, thus also in $\bar{\G}$, since the defender strategy and \emph{attacker} payoff structure in both games are the same. On the other hand, $\mathbf{x}$ is a best response to $\mathbf{y}$ in $\G$. From $\G$ to $\bar{\G}$, the defender's utility on each target is changed. The idea here is to properly rescale the attacker's attacking probability to compensate for the defender's utility change so that the defender's best response does not change. The transform function $f$ exactly does this. The interchangeability follows from the interchangeability of minimax equilibria of $\bar{\G}$. It is easy to see that the attacker's utility in any NE equals his (unique) utility in the zero-sum game $\bar{\G}$.  

As a corollary of Lemma  \ref{lem:NEset} and Theorem \ref{thm:main},  there is a polynomial time algorithm to compute \emph{one} Nash equilibrium for security games if and only if there is a polynomial time DBR oracle. However, it is widely known that the Nash equilibrium is not unique, so there are issues of equilibrium selection. In many security settings (e.g., \cite{mavronicolas2005,mavronicolas2006}), the NE that maximizes or minimizes the defender's utility is a natural choice for analyzing the game. Though Lemma \ref{lem:NEset} shows that the attacker will derive  the same utility in any NE, the defender's utilities are generally different in different NEs (examples are given in \cite{korzhyk2011}). 

As widely known, maximizing a player's utility over Nash equilibria is NP-hard even in two-player normal-form games \cite{gilboa1989nash,conitzer2008}. It will be appealing if  the optimal NE can be efficiently computed in the security game, which is widely recognized as a successful application of game theory. Our next result shows that this is indeed the case! 
We prove that though the defender's equilibrium utility is generally a convex function of the attacker's mixed strategy, it becomes linear when restricted to the domain of the attacker \emph{equilibrium strategies}. Therefore, maximizing or minimizing the defender's Nash equilibrium utility can both be efficiently handled.

\begin{theorem}\label{thm:NE}
	There is a $\poly(n)$ time algorithm to compute
	the best and worst (for the defender) Nash equilibrium for security games  over $\E$, \emph{if and only if} there is a $\poly(n)$ time algorithm to solve the $DBR$ problem over $\E$.
	Here, by ``best/worst" we mean the NE that maximizes/minimizes the defender's utility.
\end{theorem}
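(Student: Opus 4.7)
The ``only if'' direction is immediate from Theorem \ref{thm:main}. A zero-sum security game is a special case of a general-sum one, and in the zero-sum setting every Nash equilibrium is payoff-equivalent to the minimax equilibrium; hence an algorithm that computes the best (or worst) defender NE already solves the zero-sum minimax problem, which by Theorem \ref{thm:main} is $\poly(n)$-equivalent to DBR. So I will focus the rest of the argument on the ``if'' direction.

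My plan is to reduce computing the best (or worst) NE to a single linear program solvable via a DBR-based separation oracle. The conceptual point is that although the non-linear transform $f$ in Lemma \ref{lem:NEset} makes the set of attacker NE strategies appear non-convex, once \emph{any} one defender NE strategy is fixed the NE structure collapses into a polytope on which the defender's utility is linear. Step 1: apply Theorem \ref{thm:main} to the auxiliary zero-sum game $\bar{\G}(-\mathbf{\zeta}, -\mathbf{\rho}, \mathbf{\rho}, \mathbf{\zeta}, \E)$ to compute in $\poly(n)$ time a defender maximin strategy $\mathbf{x}_0 \in \P$; by Lemma \ref{lem:NEset}, $\mathbf{x}_0$ is also a defender NE strategy of $\G$.

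Step 2: show the attacker NE strategy set $\mathcal{Y}_{NE}$ of $\G$ is a polytope with an efficient separation oracle. By the interchangeability clause of Lemma \ref{lem:NEset}, $\mathbf{y}\in \mathcal{Y}_{NE}$ iff $(\mathbf{x}_0,\mathbf{y})$ is a NE, i.e., the attacker best-responds to $\mathbf{x}_0$ (so $y_i=0$ whenever $i\notin T^*(\mathbf{x}_0):=\arg\max_j[\rho_j(1-x_{0,j})+\zeta_j x_{0,j}]$) \emph{and} $\mathbf{x}_0$ best-responds to $\mathbf{y}$ (so $\sum_i y_i(r_i-c_i)x_{0,i}\geq \sum_i y_i(r_i-c_i)e_i$ for every $\mathbf{e}\in\E$). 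Both requirements are linear in $\mathbf{y}$: the former contributes $O(n)$ explicit equalities/inequalities, while the latter is an exponential family whose separation reduces to a single DBR call with the non-negative weight vector $w_i=y_i(r_i-c_i)$ followed by a comparison against $\mathbf{x}_0\cdot \mathbf{w}$. Hence $\mathcal{Y}_{NE}$ admits $\poly(n)$-time separation, and any LP over $\mathcal{Y}_{NE}$ is solvable in $\poly(n)$ time by Theorem \ref{thm:sep}.

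Step 3: interchangeability implies every defender NE strategy is a best response to every $\mathbf{y}\in\mathcal{Y}_{NE}$ and therefore yields the same payoff against $\mathbf{y}$; in particular the defender's NE utility, viewed as a function of $\mathbf{y}$, is simply the linear function $U^d(\mathbf{x}_0,\mathbf{y})=\sum_i y_i[x_{0,i}r_i+(1-x_{0,i})c_i]$. Maximizing (resp.\ minimizing) it over the polytope $\mathcal{Y}_{NE}$ returns the best (resp.\ worst) attacker NE strategy $\mathbf{y}^*$, and $(\mathbf{x}_0,\mathbf{y}^*)$ is the desired NE; a sparse convex decomposition of $\mathbf{x}_0$ over pure strategies in $\E$, if needed for implementation, is provided by Theorem \ref{thm:decompose}. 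The hard part will be Step 2: showing that the raw NE set, which is genuinely non-convex when viewed through Lemma \ref{lem:NEset}'s non-linear map $f$, nevertheless becomes a polytope with DBR-based separation once any defender NE strategy is fixed. This is exactly the ``linear on the NE domain'' phenomenon advertised just before the theorem statement, and it lets the argument bypass the transform $f$ entirely.
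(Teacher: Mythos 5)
Your proposal is correct, and its skeleton matches the paper's: both fix one defender NE strategy (obtained from the auxiliary zero-sum game $\bar{\G}(-\mathbf{\zeta},-\mathbf{\rho},\mathbf{\rho},\mathbf{\zeta},\E)$ via Lemma \ref{lem:NEset} and Theorem \ref{thm:main}), characterize the attacker NE set $Y_{NE}$ by the same three families of linear constraints (attacker best-responds to the fixed strategy, the fixed strategy best-responds to $\mathbf{y}$, simplex), separate the exponential family with one DBR call on the weights $w_i=y_i(r_i-c_i)$, and finish with a linear program over $Y_{NE}$. Where you genuinely diverge is the justification that the objective is linear on $Y_{NE}$. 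The paper proves linearity of $U_{NE}(\mathbf{y})=\max_{\mathbf{e}\in\E}U^d(\mathbf{e},\mathbf{y})$ by pushing $\mathbf{y}$ through the non-linear transform $f$ of Lemma \ref{lem:NEset} and rewriting the max as $\lambda\cdot Val(\bar{\G})+\sum_{i=1}^n\frac{r_i-c_i}{\rho_i-\zeta_i}y_i\rho_i$, which is linear because $Val(\bar{\G})$ is a constant and $\lambda$ is linear in $\mathbf{y}$. You instead observe that interchangeability already makes $\mathbf{x}_0$ a best response to every $\mathbf{y}\in Y_{NE}$, so on that domain $U_{NE}(\mathbf{y})=U^d(\mathbf{x}_0,\mathbf{y})=\sum_{i=1}^n y_i\bigl[x_{0,i}r_i+(1-x_{0,i})c_i\bigr]$, linear by inspection. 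Your route is shorter, bypasses $f$ and the game-value computation entirely, and makes transparent why the convex pointwise maximum flattens out on $Y_{NE}$: it is attained by the single fixed point $\mathbf{x}_0$ throughout that domain. The paper's computation buys an explicit closed form for $U_{NE}$ in terms of the primitives and $Val(\bar{\G})$, but that extra information is not needed for the reduction. One minor point of hygiene: your separation oracle should verify the simplex constraints before invoking DBR, so that the weights $w_i=y_i(r_i-c_i)$ are guaranteed non-negative (DBR is only defined for non-negative weight vectors); the paper's write-up relies on the same implicit ordering.
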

\begin{proof}
	The ``only if" direction follows from Theorem \ref{thm:main}, the fact that all Nash equilibria are payoff equivalent to the minimax equilibrium in zero-sum games, and that zero-sum games are special cases of general-sum games.  For the ``if" direction, we only prove this for the case of computing the best Nash equilibrium since computing the worst Nash equilibrium is similar.
	
	The defender's equilibrium utility is a function of the attacker's strategy, which we denote as $U_{NE}(\mathbf{y})$. We first derive the function form of $U_{NE}(\mathbf{y})$, as follows:
	\begin{eqnarray*}
		U_{NE}(\mathbf{y}) &=& \max_{\mathbf{e} \in \E} U^d(\mathbf{e}, \mathbf{y}) \\
		& = & \max_{\mathbf{e} \in \E} \bigg[  \sum_{i=1}^n e_i y_i(r_i - c_i) \bigg]+ \sum_{i=1}^n y_i c_i.
	\end{eqnarray*}
	$U_{NE}(\mathbf{y})$ is a convex function since $\max_{\mathbf{e} \in \E} [  \sum_{i=1}^n e_i y_i(r_i - c_i) ]$ is convex. To compute the best equilibrium, we need to maximize this convex function over feasible $\mathbf{y}$ which is generally difficult to handle. Interestingly, we show that the term $\max_{\mathbf{e} \in \E}  \big[ \sum_{i=1}^n e_i y_i(r_i - c_i) \big] $ becomes linear in $\mathbf{y}$ when $\mathbf{y}$ is restricted to be an attacker equilibrium strategy. Let $(\mathbf{x},\mathbf{y})$ be an NE of $\G$.  Recall from Lemma \ref{lem:NEset}  that $(\mathbf{x},f(\mathbf{y}))$ is the minimax equilibrium of the zero-sum game $\bar{\G}(-\mathbf{\zeta}, -\mathbf{\rho},\mathbf{\rho},\mathbf{\zeta},\E)$. Therefore, 
	\begin{eqnarray} \nonumber
	Val(\bar{\G}) &=& U^d(\mathbf{x},f(\mathbf{y})) \\ \nonumber
	&=& \sum_{i=1}^n f_i(\mathbf{y}) x_i \big(  \rho_i - \zeta_i \big) - \sum_{i=1}^n f_i(\mathbf{y}) \rho_i \\ \label{eq:valG}
	&=& \max_{\mathbf{e} \in \E} \bigg[ \sum_{i=1}^n f_i(\mathbf{y}) e_i \big(  \rho_i - \zeta_i \big) \bigg] - \sum_{i=1}^n f_i(\mathbf{y}) \rho_i 
	\end{eqnarray}
	is the game value of $\bar{\G}$. Recalling the transform $f_i(\mathbf{y}) = \frac{1}{\lambda} \frac{r_i - c_i}{\rho_i - \zeta_i} y_i$ and utilizing Equation \eqref{eq:valG}, we have 
	\begin{align*}
	\max_{\mathbf{e} \in \E} \bigg[ \sum_{i=1}^n e_i y_i(r_i - c_i) \bigg]& =  \max_{\mathbf{e} \in \E}  \bigg[ \sum_{i=1}^n e_i \lambda f_i(\mathbf{y})(\rho_i - \zeta_i) \bigg] & \mbox{By definition of $f_i(\mathbf{y})$ }\\
	& =  \lambda \cdot \max_{\mathbf{e} \in \E} \bigg[ \sum_{i=1}^n e_i f_i(\mathbf{y})(\rho_i - \zeta_i) \bigg] & \mbox{}\\
	& =   \lambda \cdot \bigg[ Val(\bar{\G}) + \sum_{i=1}^n f_i(\mathbf{y}) \rho_i  \bigg] & \mbox{By Equation \eqref{eq:valG}}\\
	& = \lambda \cdot Val(\bar{\G}) + \sum_{i=1}^n  \frac{r_i - c_i}{\rho_i - \zeta_i} y_i  \rho_i  & \mbox{By definition of $f_i(\mathbf{y})$}
	\end{align*}
	Crucially, $Val(\bar{G})$ is a constant and does not depend on $\mathbf{y}$.  Since $\lambda=\sum_{i=1}^n  \frac{r_i - c_i}{\rho_i - \zeta_i} y_i$ is linear in $\mathbf{y}$,  so is $\max_{\mathbf{e} \in \E} \big[ \sum_{i=1}^n e_i y_i(r_i - c_i) \big]$. Therefor,  $U_{NE}(\mathbf{y})$ is linear in $\mathbf{y}$. We note again that this is true only when $\mathbf{y}$ is restricted to be an attacker NE strategy since the above derivation is not valid otherwise. Fortunately, we only need to optimize over Nash equilibria. Notice that $Val(\bar{\G})$ can be computed in $\poly(n)$ time if the DBR admits a $\poly(n)$ time algorithm (by Theorem \ref{thm:main}). 
	As a result, $U_{NE}(\mathbf{y})$ is a linear function of $\mathbf{y}$ which can be evaluated efficiently. 
	Therefore, computing the best Nash equilibrium reduces to a linear optimization over the set of all the attacker's Nash equilibrium strategies, denoted as $Y_{NE}$.  To complete the proof, we now show that $Y_{NE}$ is a polytope, and admits a $\poly(n)$ time separation oracle. Therefore, by Theorem \ref{thm:sep}, we conclude that the best (for the defender) Nash equilibrium can be computed in $\poly(n)$ time.
	
	\begin{lemma}\label{lem:Ypolytope}
		The set of the attacker's Nash equilibrium strategies $Y_{NE}$ is a polytope, and admits a $\poly(n)$ time separation oracle if the DBR problem over $\E$ can be solved in $\poly(n)$ time.
	\end{lemma} 
	\noindent \emph{Proof of Lemma \ref{lem:Ypolytope}.} First, compute an arbitrary Nash equilibrium $(\tilde{ \mathbf{x} },\tilde{ \mathbf{y}})$ of $\G$. By Lemma \ref{lem:NEset} and Theorem \ref{thm:main}, this can be done in $\poly(n)$ time. We claim that the set of attacker's Nash equilibrium strategies $Y_{NE}$ is characterized precisely by the following three sets of linear constraints on $\mathbf{y} \in \RR^n$:
	\begin{eqnarray} \label{eq:attBest}
	& & \sum_{i=1}^n y_i \big(    \rho_i \cdot [1-\tilde{  x_i }] + \zeta_i \cdot \tilde{  x_i } \big) \geq    \rho_k \cdot [1-\tilde{  x_k }] + \zeta_k \cdot \tilde{  x_k }, \qquad \qquad \quad \forall k \in [n]. \\ \label{eq:defBest}
	& & \sum_{i=1}^n y_i \big(  r_i \cdot \tilde{ x_i }+ c_i \cdot [1 - \tilde{ x_i} ] \big) \geq \sum_{i=1}^n y_i \big(  r_i \cdot e_i + c_i \cdot [1 - e_i ] \big) , \, \qquad \forall \mathbf{e} \in \E. \\ \label{eq:simplex}
	& & \sum_{i=1}^n y_i = 1 \qquad and \qquad y_i \geq 0, \, \, \, \forall i \in[n].
	\end{eqnarray} 
	Inequality \eqref{eq:attBest} restricts $\mathbf{y}$ to be an attacker best response to the defender equilibrium strategy $\tilde{\mathbf{x}}$; Inequality \eqref{eq:defBest} means that $\tilde{\mathbf{x}}$ should be a defender best response to the attacker mixed strategy $\mathbf{y}$.  
	
	We first show that these constraints are necessary. By Lemma \ref{lem:NEset}, Nash equilibria in security games are interchangeable. Therefore if $\mathbf{y}$ is an attacker equilibrium  strategy, $(\tilde{\mathbf{x}}, \mathbf{y})$ is a Nash equilibrium. Thus $\mathbf{y}$ and $\tilde{\mathbf{x}}$ are best responses to each other, as described by Inequality \eqref{eq:attBest} and \eqref{eq:defBest}.  Since $\mathbf{y}$ is a mixed strategy, thus Constraint \eqref{eq:simplex} holds.
	
	To show that these constraints are also sufficient, let $\mathbf{y}$ be any vector that satisfies these constraints. Constraint \eqref{eq:simplex} restricts $\mathbf{y}$ to be a valid mixed strategy; Inequality \eqref{eq:attBest} and \eqref{eq:defBest} induce that $(\tilde{\mathbf{x}},\mathbf{y})$ is a Nash equilibrium. To sum up, Constraints $\eqref{eq:attBest} \sim \eqref{eq:simplex}$ precisely characterize the set $Y_{NE}$.
	
	We now show that $Y_{NE}$ admits a $\poly(n)$ time separation oracle. Notice that Constraints \eqref{eq:attBest} and \eqref{eq:simplex} can be explicitly checked in $\poly(n)$ time. Constraint \eqref{eq:defBest} can be simplified as follows:
	\begin{eqnarray*}
		& & 	\sum_{i=1}^n y_i \big(  r_i \cdot \tilde{ x_i }+ c_i \cdot [1 - \tilde{ x_i} ] \big) \geq \sum_{i=1}^n y_i \big(  r_i \cdot e_i + c_i \cdot [1 - e_i ] \big) , \, \qquad \forall \mathbf{e} \in \E. \\
		& \Leftrightarrow & \sum_{i=1}^n y_i \big(  r_i -c_i \big) \cdot \tilde{ x_i }\geq \sum_{i=1}^n y_i \big(  r_i -c_i \big) \cdot e_i  , \, \qquad \forall \mathbf{e} \in \E. \\
	\end{eqnarray*} 
	The inequality can be checked by feeding $w_i = y_i \big(  r_i -c_i \big) $ into the DBR problem over $\E$ and examining whether the optimal objective value is at most $\sum_{i=1}^n y_i \big(  r_i -c_i \big) \cdot \tilde{ x_i }$, and if not, returning the optimal $\mathbf{e}$ from the DBR problem as a certificate of the violated constraint. Therefore, a $\poly(n)$ time DBR oracle yields a $\poly(n)$ time separation oracle for $Y_{NE}$. This completes our proof of Lemma \ref{lem:NEset}, as well as the proof of the theorem. 
\end{proof}
The following is a simple corollary of Theorem \ref{thm:NE}. It follows from the fact proved in Theorem \ref{thm:NE} that the defender's equilibrium utility is a linear function over the set $Y_{NE}$ of attacker equilibrium strategies and $Y_{NE}$ is a polytope. Therefore, we can convert the computation of an NE with particular defender equilibrium utility to feasibility checking of a linear system. 
\begin{corollary}
	Let $\G$ be any security game, $U_{\max}$ [$U_{\min}$] be the best [worst] defender utility among all Nash equilibria. Then for any $U \in [U_{\min},U_{\max}]$, there exists an NE of $\G$ with defender utility $U$. Moreover, such an NE can be computed in polynomial time if the DBR problem over $\E$ can be solved in polynomial time.    
\end{corollary}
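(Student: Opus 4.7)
The plan is to leverage the two structural facts established inside the proof of Theorem~\ref{thm:NE}: (i) the defender's equilibrium utility $U_{NE}(\mathbf{y})$, when restricted to the set $Y_{NE}$ of attacker equilibrium strategies, is a \emph{linear} function of $\mathbf{y}$; and (ii) $Y_{NE}$ is a polytope that admits a $\poly(n)$-time separation oracle whenever the DBR problem over $\E$ does. Given these, both the existence claim and the algorithmic claim follow quickly.

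For the existence part, first compute (or merely invoke the existence of) $\mathbf{y}_{\max}, \mathbf{y}_{\min} \in Y_{NE}$ attaining $U_{\max}$ and $U_{\min}$ respectively; these exist because a linear function on a compact polytope attains its extrema. For any $U \in [U_{\min},U_{\max}]$, write $U = \lambda U_{\max} + (1-\lambda) U_{\min}$ with $\lambda \in [0,1]$, and set $\mathbf{y}_\lambda := \lambda \mathbf{y}_{\max} + (1-\lambda)\mathbf{y}_{\min}$. Since $Y_{NE}$ is convex we have $\mathbf{y}_\lambda \in Y_{NE}$, and by linearity of $U_{NE}$ on $Y_{NE}$ we get $U_{NE}(\mathbf{y}_\lambda) = U$. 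Then, by the interchangeability part of Lemma~\ref{lem:NEset}, pairing $\mathbf{y}_\lambda$ with the arbitrary defender equilibrium strategy $\tilde{\mathbf{x}}$ constructed in the proof of Lemma~\ref{lem:Ypolytope} yields a valid Nash equilibrium of $\G$ with defender utility exactly $U$.

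For the computational part, recall from the proof of Theorem~\ref{thm:NE} that on $Y_{NE}$ one has the explicit linear expression
\begin{equation*}
U_{NE}(\mathbf{y}) \;=\; \lambda(\mathbf{y})\cdot Val(\bar{\G}) + \sum_{i=1}^n \Bigl(\tfrac{r_i - c_i}{\rho_i - \zeta_i}\rho_i + c_i\Bigr) y_i,
\end{equation*}
where $\lambda(\mathbf{y}) = \sum_i \tfrac{r_i-c_i}{\rho_i-\zeta_i} y_i$ is itself linear in $\mathbf{y}$, and $Val(\bar{\G})$ is a constant that can be computed in $\poly(n)$ time via Theorem~\ref{thm:main}. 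Append the single linear equation $U_{NE}(\mathbf{y}) = U$ to the defining constraints of $Y_{NE}$; the resulting polytope still admits a $\poly(n)$-time separation oracle (the new constraint is checked explicitly; the rest reuses Lemma~\ref{lem:Ypolytope}). By Theorem~\ref{thm:sep}, one can then find a feasible vertex $\mathbf{y}^\star$ in $\poly(n)$ time and return $(\tilde{\mathbf{x}},\mathbf{y}^\star)$ as the desired NE.

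No real obstacle arises: the existence argument is a one-line convex-combination + linearity argument, and the algorithmic step only bolts one linear equality onto a polytope whose separation oracle is already in hand. The only minor care is to use the correct linear form of $U_{NE}$ on $Y_{NE}$ (rather than the globally convex $\max_{\mathbf{e}\in\E}$ form of $U_{NE}(\mathbf{y})$), which is precisely the content delivered by the chain of identities inside the proof of Theorem~\ref{thm:NE}; feasibility of the augmented system is guaranteed by the existence part.
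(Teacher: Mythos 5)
Your proposal is correct and follows essentially the same route as the paper: the paper's (very brief) argument is precisely that $U_{NE}$ is linear on the polytope $Y_{NE}$, so attaining any intermediate value $U$ reduces to feasibility checking of the linear system defining $Y_{NE}$ augmented with the equality $U_{NE}(\mathbf{y})=U$, solvable via the separation oracle from Lemma~\ref{lem:Ypolytope}. Your write-up just makes explicit the convex-combination existence argument and the interchangeability step (pairing the resulting $\mathbf{y}$ with $\tilde{\mathbf{x}}$), both of which the paper leaves implicit.
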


\section{Consequences of the Equivalence Theorems}\label{sec:cons}
In this section we discuss some implications of these equivalence theorems. 
The following corollary of Theorem \ref{thm:main}, \ref{thm:SSE} and \ref{thm:NE} shows that the complexity of a security game is determined by the set system $\E$. 
\begin{corollary}\label{cor:equiv}
	For any set system $\E$, the following problems reduce to \emph{each other} in polynomial time: 
	
	(1) Combinatorial optimization over $\E$ for non-negative linear objectives; 
	
	(2) Solving zero-sum security games over $\E$; 
	
	(3) Computing the strong Stackelberg equilibrium for  security games over $\E$; 
	
	(4) Computing the best or worst (for the defender) Nash equilibrium for security games  over $\E$.
\end{corollary}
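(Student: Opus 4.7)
The plan is to observe that Corollary \ref{cor:equiv} is essentially a packaging of the three equivalence theorems already established (Theorems \ref{thm:main}, \ref{thm:SSE}, and \ref{thm:NE}), tied together through the pivotal problem of DBR over $\E$. First I would recall from Problem \ref{prob:DBR} and the subsequent discussion in Section \ref{sec:SG:comb} that problem (1) in the statement, namely combinatorial optimization over $\E$ for non-negative linear objectives, is \emph{by definition} exactly the DBR problem over $\E$. So the whole corollary reduces to showing that each of (2), (3), and (4) is polynomial-time equivalent to the DBR problem.

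Next I would invoke the three theorems directly. Theorem \ref{thm:main} supplies the equivalence $(1)\Leftrightarrow(2)$: solving the DBR problem in $\poly(n)$ time is necessary and sufficient for computing the minimax equilibrium of zero-sum security games over $\E$ in $\poly(n)$ time. Theorem \ref{thm:SSE} supplies $(1)\Leftrightarrow(3)$, and Theorem \ref{thm:NE} supplies $(1)\Leftrightarrow(4)$, where in each case ``equivalent'' means each problem reduces to the other in $\poly(n)$ time. I would then note that polynomial-time reducibility is transitive and composes under a constant number of reductions, so chaining $(2)\leftrightarrow(1)\leftrightarrow(3)\leftrightarrow(1)\leftrightarrow(4)$ yields pairwise polynomial-time reductions between any two of the four problems.

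There is essentially no technical obstacle here: all the heavy lifting has been done in the three theorems (e.g., the regularized LP trick in Lemma \ref{lem:reduce1}, the down-monotone construction of $\hat{\P}$ in Lemma \ref{lem:down}, the best-response enumeration LP in Theorem \ref{thm:SSE}, and the linearity-on-$Y_{NE}$ argument in Theorem \ref{thm:NE}). The only minor thing worth flagging explicitly is that the ``only if'' directions of Theorems \ref{thm:SSE} and \ref{thm:NE} are inherited from the zero-sum case (Theorem \ref{thm:main}) via the observation that zero-sum security games are a special case of general-sum security games, and that SSE and NE coincide with the minimax equilibrium up to payoff equivalence in the zero-sum setting; these facts were already used inside the proofs of Theorems \ref{thm:SSE} and \ref{thm:NE} and so need no additional argument here.

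Therefore the proof is a two-line statement: identify problem (1) with the DBR problem over $\E$, and then cite Theorems \ref{thm:main}, \ref{thm:SSE}, and \ref{thm:NE} to conclude that each of (2), (3), (4) is polynomial-time equivalent to (1), hence to each other by transitivity. If any expansion is warranted, it would be to remark that because all reductions in the underlying theorems preserve $\poly(n)$ running time and the composition of a constant number of such reductions is still polynomial, the equivalence is genuinely between the complexity classes of the four problems, confirming the paper's main conceptual message that the complexity of a security game is determined entirely by the set system $\E$.
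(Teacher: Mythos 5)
Your proposal is correct and matches the paper's treatment exactly: the paper states this corollary without a separate proof, presenting it as an immediate consequence of Theorems \ref{thm:main}, \ref{thm:SSE}, and \ref{thm:NE}, with problem (1) identified as the DBR problem and the pairwise equivalences obtained by chaining the three theorems through (1). Your additional remarks on transitivity of polynomial-time reductions and on where the ``only if'' directions come from are accurate and consistent with the paper.
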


Corollary \ref{cor:equiv} provides a more convenient way to understand the computational complexity of security games, since the complexity of the combinatorial optimization problem over $\E$ is much easier to study and analyze. In fact, Corollary \ref{cor:equiv} simultaneously implies the computational complexity for solving various types of security games. For example, when $\E$ is any matroid set system or when the optimization over $\E$ has a min-cost max flow formulation, the equilibrium of the security game can be computed efficiently. On the other hand, when the optimization over $\E$ is  a packing problem, a Knapsack problem or a coverage problem (even vertex coverage), the equilibrium computation for these security games is NP-hard in general.

Finally, using the combinatorial characterization of security games, we can  easily recover and strengthen some known complexity results in the literature of security games, as well as resolve some open problems from previous work. For example, Xu et al. \cite{Xu14a} considered the computation of the minimax equilibrium in zero-sum spatio-temporal security games. They proved that the DBR problem there is NP-hard, but the computation of the minimax equilibrium is left open. Theorem \ref{thm:main} resolves this open question. Brown et al. \cite{Brown16a} studied the airport passenger screening game (see Section \ref{sec:SG:comb}, the example of packing problems), and proved the NP-hardness to solve the game. This also follows from Corollary \ref{cor:equiv} and the fact that the independent set problem is NP-hard.  Gan et al. \cite{gan2015security} considered security games on graphs where targets are vertices. The defender chooses a subset of vertices to patrol, by which the patrolled vertices as well as their adjacent vertices are covered. The DBR problem is to, given a non-negative weight for each vertex, compute a subset of vertices  so that the total weight of the covered vertices is maximized. This is the optimization variant of the dominating set problem, a well-known NP-hard problem. Therefore, not only computing the SSE is NP-hard (as shown by Gan et al.), our results strengthen the hardness to computing the minimax equilibrium. 
Korzhyk et al. \cite{korzhyk2010complexity} considered security games in which the DBR problem is the coverage problem as discussed in Section \ref{sec:SG:comb}. They showed polynomial solvability of security games when each (homogeneous) resource can protect a subset of at most $2$ targets (e.g., a pair of round-trip flights). This also follows from Theorem \ref{thm:main} and the fact that weighted 2-cover is polynomial time solvable. The NP-hardness for the case with sets of at most $3$ targets follows from Theorem \ref{thm:main} and the fact that 3-cover is NP-hard. Finally, Letchford and Conitzer \cite{letchford2013} proved complexity results under various conditions for security games played on graphs, where vertices are targets and each security resource can patrol a set of vertices on a path of the graph. Some of their results can be recovered using our results as well. For example, the paper considered two conditions, under any of which the minimax equilibrium of the game can be computed in polynomial time. It is easy to check that the DBR problems under these two conditions can be solved respectively by a greedy algorithm (the case of Theorem 1 in \cite{letchford2013}) and a dynamic program (the case of Theorem 2 in \cite{letchford2013}), thus the polynomial solvability also follows from our results.   
We omit further details here.

\section{Conclusions and Discussions}\label{sec:conclude}
In this paper, we systematically studied the computational complexity of equilibrium computation in security games. Our main result is the polynomial time equivalence between computing the three mostly adopted equilibrium concepts in security games, namely, the minimax equilibrium, strong Stackelberg equilibrium, best/worst Nash equilibrium, and computing the defender's best response. We believe that our results form a theoretical basis for further  algorithm design and complexity analysis in security games. 

Future research can take a number of directions. First, given that exactly solving the DBR problem is NP-hard in many cases, it is interesting to examine the approximate version of all our equivalence theorems. That is, how an approximate defender best response oracle relates to the approximate computation of an equilibrium. We note that using the no-regret learning framework, one can convert an FPTAS for the DBR problem to an algorithm for computing an  $\epsilon$-minimax equilibrium (see \cite{Immorlica2011}), but the reverse direction and other generalizations are open. Our results on computing the best/worst Nash equilibrium are surprising. Since the security game is a special class of bilinear games, we wonder whether similar results hold in other special (and interpretable) class of bilinear games. Finally, there are several ways to generalize our model. For example, the players' utility functions may  not be linear, but can still by compactly represented and computed.  One particular example is the network interdiction game played on a graph \cite{washburn1995two,tsai2010}, in which the defender chooses edges to defend and the attacker chooses a path to attack. The task of the defender is to interdict the attacker at a certain edge. This is not captured by our (bilinear) framework. Another generalization is to allow the attacker to attack multiple targets \cite{korzhyk2011security}. We wonder how the computational complexity of the proposed four problems relates to each other in these generalized settings.

\section*{ACKNOWLEDGMENTS}
We would like to thank Shaddin Dughmi, Milind Tambe and Vincent Conitzer for helpful discussions. We also thank the anonymous EC reviewers for helpful feedback and suggestions.

	\bibliography{refers}
	\bibliographystyle{plain}

\end{document}